\newcommand{\textem}[1] {{\em #1}}
\newcommand\msize[1]{\left|#1\right|}
\newcommand\mset[1]{\{#1\}}
\newcommand\mseq[1]{\langle#1\rangle}
\newcommand\order{\mathcal{O}}
\newenvironment{mycase}[2]{%
\vspace{3mm} \noindent \textbf{Case~#1}: #2\par}{}
\newenvironment{mycaselast}[2]{%
\vspace{3mm} \noindent \textbf{Case~#1}: #2\par}{\vspace{3mm}}
\newenvironment{listing}[1]{%
        \begin{list}{*}{%
                 \settowidth{\labelwidth}{#1}%
                 \setlength{\leftmargin}{\labelwidth}%
                  \advance \leftmargin by 12pt
                   \setlength{\itemsep}{0pt}%
                   \setlength{\parsep}{0pt}%
                   \setlength{\topsep}{0pt}%
                   \setlength{\parskip}{0pt}%
}%
}{%
\end{list}}
\newcommand\vertices{V}
\newcommand\inside{\mathsf{in}}
\newcommand\outside{\mathsf{out}}
\newcommand\parent{\mathsf{par}}
\newcommand\largest{\mathsf{larg}}
\newcommand\insertion{\mathsf{ins}}
\newcommand\lowerleft{\mathsf{left}}
\newcommand\setkout{\mathcal{S}_{\leq k}}
\definecolor{darkred}{rgb}{0.8,0,0}
\newcommand\convexhull{\mathsf{CH}}
\newcommand\ftree{\mathcal{T}_{\leq k}}
\newcommand\prede{\mathsf{pred}}
\newcommand\succe{\mathsf{succ}}
\newcommand\embed{\mathsf{emb}}
\newcommand\dig{\mathsf{dig}}
\newcommand\rmv{\mathsf{rmv}}
\newcommand\pseq{\mathsf{PS}}
\newcommand\tang{t}
\newcommand\inspset{I}
\newcommand\region{R}
\newcommand\regnum{R_{\#}}
\newcommand\area{\mathsf{area}}
\newtheorem{theorem}{Theorem}[section]
\newtheorem{corollary}{Corollary}[theorem]
\newtheorem{lemma}[theorem]{Lemma}
\journal{arXiv}
\begin{document}
\begin{frontmatter}

\title{Efficient Enumeration of At Most $k$-Out Polygons}

\author[inst1]{Waseem Akram}

\affiliation[inst1]{organization={Indian Institute of Technology},%Department and Organization
            %addressline={Address One}, 
            %city={City One},
            %postcode={00000}, 
            %state={State One},
            country={India}}

\author[inst2]{Katsuhisa Yamanaka}
\affiliation[inst2]{organization={Iwate University},%Department and Organization
            %addressline={Address Two}, 
            %city={City Two},
            %postcode={22222}, 
            %state={State Two},
            country={Japan}}

\begin{abstract}
%% Text of abstract
Let $S$ be a set of $n$ points in the Euclidean plane and general position i.e., no three points are collinear.
An \emph{at most $k$-out polygon of $S$} is a simple polygon such that each vertex is a point in $S$ and there are at most $k$ points outside the polygon.
In this paper, we consider the problem of enumerating all the at most $k$-out polygon of $S$.
We propose a new enumeration algorithm for the at most $k$-out polygons of a point set.
Our algorithm enumerates all the at most $k$-out polygons in $\order(n^2 \log{n})$ delay, while the running time of an existing algorithm is $\order(n^3 \log{n})$ delay.
\end{abstract}

%%Graphical abstract
% \begin{graphicalabstract}
% \includegraphics{grabs}
% \end{graphicalabstract}

\begin{keyword}
enumeration algorithm \sep simple polygons \sep $k$-out polygons
\end{keyword}

\end{frontmatter}

%% \linenumbers

%% main text
\section{Introduction}
\label{sec:intro}

Enumeration problems (or listing) are fundamental and important in computer science and have applications in various domains, including bioinformatics and artificial intelligence.
A lot of enumeration algorithms for enumeration problems have been proposed~\cite{Wasa16}.
Among them, enumeration problems for geometric objects have been studied. 
For example, enumeration algorithms are known for triangulations~\cite{AvisF96,B02,KT09,Wettstein17}, non-crossing spanning trees~\cite{AvisF96,KT09,NakahataHMY20,Wettstein17},
non-crossing spanning cycles~\cite{NakahataHMY20,Wettstein17},
non-crossing convex partitions~\cite{Wettstein17},
non-crossing perfect matchings~\cite{Wettstein17},
non-crossing convex subdivisions~\cite{Wettstein17},
pseudoline arrangements~\cite{YamanakaNMUN09}, unfoldings of Platonic solids~\cite{HoriyamaS11}, Archimedian solids~\cite{HoriyamaMS18}, and so on.

A \emph{simple polygon of a point set} is a simple polygon such that every vertex is a point of the point set.
In this paper, we focus on  enumeration problems
of simple polygons of a point set.
For several classes of simple polygons of a point set, enumeration problems have been studied.
We are given a set $S$ of $n$ points in the Euclidean plane and general position.
A \emph{surrounding polygon} of $S$ is a simple polygon such that
every point in $S$ is either a vertex of the polygon or inside the polygon.
The class of surrounding polygons was proposed by Yamanaka\emph{~et~al.}~\cite{YamanakaAHOUY21} and they proposed an enumeration algorithm which enumerates in $\order(n^2 \log n)$ time for each.
The running time was improved to $\order(n^2)$ time for each~\cite{TeruiYHHKU23}.
A surrounding polygon includes an important class of simple polygons.
A surrounding polygon is called a \textem{non-crossing spanning cycle} of $S$ if every point in $S$ is a vertex of the polygon.
The non-crossing spanning cycles of a point set are known as appealing objects in the area of computational geometry
and have been studied in the contexts of counting~\cite{MarxM16,NakahataHMY20,Wettstein17}, random generation~\cite{AuerH96,Sohler99,TeramotoMUA06,ZhuSSM96}, and enumeration~\cite{NakahataHMY20,Wettstein17}.
For the enumeration problem of non-crossing spanning cycles, it was an open problem whether there exists an output-polynomial time algorithm.\footnote{
The running time of an enumeration algorithm is said to be \emph{output-polynomial} if it is polynomial in the input size and the output size of the algorithm.
}
Recently, Eppstein~\cite{Eppstein24} showed that non-crossing spanning cycles 
%of a given point set 
can be enumerated in output-polynomial time.
Terui\emph{~et~al.}~\cite{TeruiYHHKU23} proposed another class of simple polygons. 
A simple polygon of a point set is an \emph{empty polygon} if every point in $S$ is either a vertex of the polygon or outside the polygon.
They proposed an algorithm that enumerates all the empty polygons of a given point set in $\order(n^2)$ time for each.
%%%
The empty polygons are a generalization of empty convex polygons, which are empty polygons of a point set such that the polygons are convex.
The empty convex polygons have been studied in the contexts of counting~\cite{Bae22,MitchellRSW95,RoteWZW91,RoteW92} and enumeration~\cite{DobkinEO90}.

Recently, we introduced a new class of simple polygons, called at most $k$-out polygons, of a point set~\cite{WaseemY24}.
A simple polygon $P$ of a point set $S$ is an \emph{at most $k$-out polygon} if there are at most $k$ points outside $P$ and other points are either vertices of $P$ or inside $P$.
See \figurename~\ref{fig:k-out} for examples.
The class of at most $k$-out polygons is a generalization of the class of surrounding polygons in the sense that the set of at most $k$-out polygons of a point set $S$ coincides with (1) the set of surrounding polygons of $S$ when $k=0$ and (2) the set of simple polygons of $S$ when $k=n-3$.
\begin{figure}[tb]
	\centering
	\includegraphics[scale=.5]{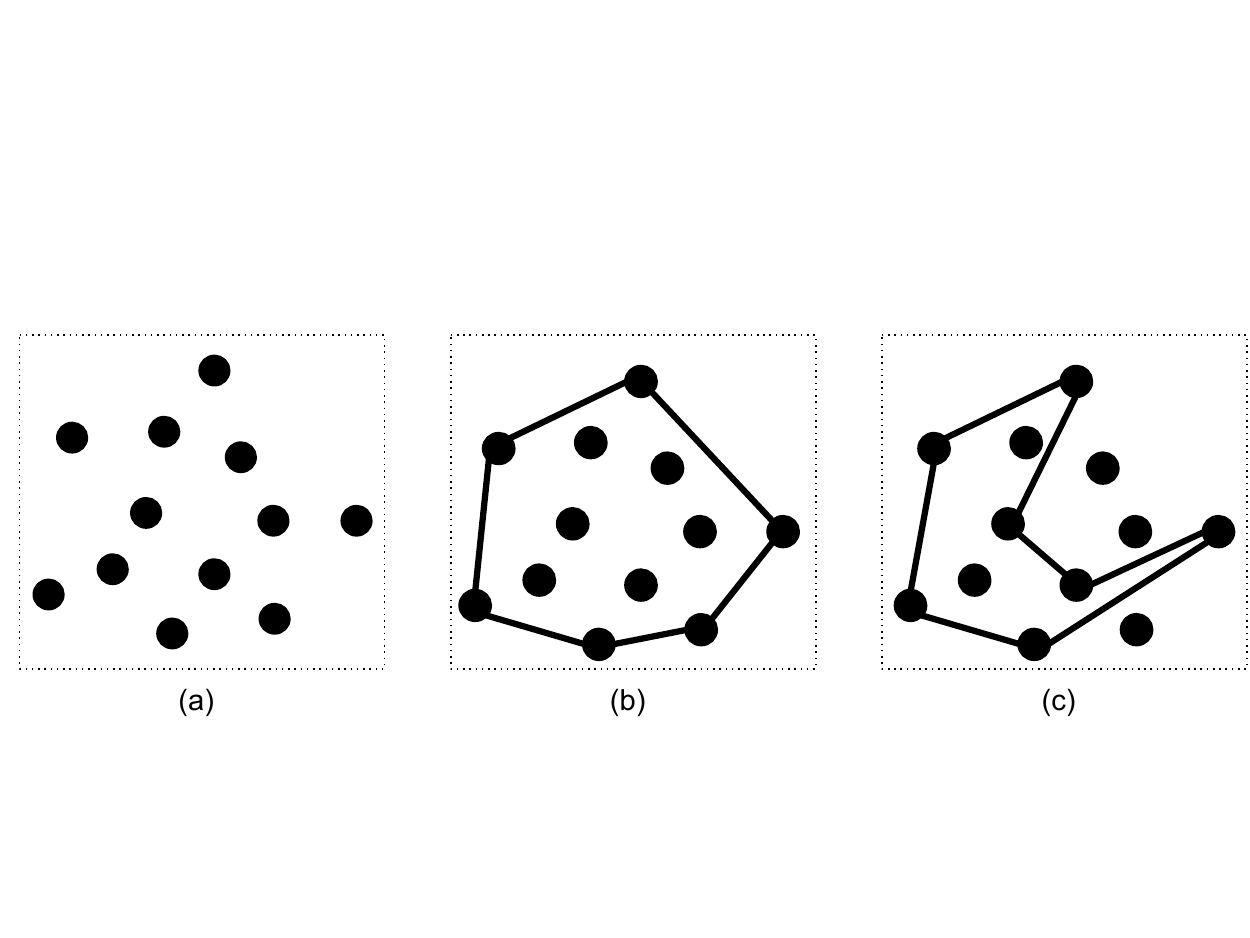}
	\caption{(a) A given point set $S$. (b) An at most $3$-out polygon of $S$ without any outside point~(the convex hull). (c) An at most $3$-out polygon of $S$ with 3 outside points.}
 \label{fig:k-out}
\end{figure}
We design an algorithm that enumerates all the at most $k$-out polygons in $\order(n^2 \log{n})$ delay
and $\order(n^2)$ space, 
while the running time of an existing enumeration algorithm~\cite{WaseemY24} is $\order(n^3 \log{n})$ delay. 
Our enumeration algorithm is based on the reverse-search technique by Avis and Fukuda~\cite{AvisF96}.
Since simple polygons are a particular instance of at most $k$-out polygons, our algorithm can be used to enumerate all simple polygons of the point set $S$. To the best of our knowledge, there is no non-trivial enumeration algorithm for simple polygons.

\section{Preliminaries}

In this section, we define some notations required in this paper.

Let $S$ be a set of $n$ points in the Euclidean plane.
Throughout this paper, we assume that $S$ is in general position, i.e., no three points are collinear.
For a point $p \in S$, $x(p)$ and $y(p)$ denote the $x$- and $y$-coordinate values of $p$, respectively.
For two points $p,q$ in $S$,
we denote $p <_{\mathrm{xy}} q$ if $x(p) < x(q)$ or $x(p)=x(q)$ and $y(p)<y(q)$.
A point $p$ is the \emph{lower-left} in $S'\subseteq S$ if $p <_{\mathrm{xy}} p'$ for every $p' \in S' \setminus \mset{p}$.
Note that the lower-left point of a point set is unique.
For three points $p,q,r \in S$, we denote the region (properly) enclosed by the triangle consisting of the points by $\triangle(p,q,r)$.

%%% Simple polygons
A \textem{simple polygon} is a closed region of the plane
enclosed by a simple cycle of line segments.
Here, a simple cycle means that
two adjacent line segments intersect only at their
common endpoint and no two non-adjacent line segments
intersect.
%%%
%%% Definition of simple polygon of S
%%%
A sequence $P = \mseq{p_1,p_2,\ldots,p_t}$, $(t\leq n)$, of points in $S$ is a \emph{simple polygon of $S$} if the alternating sequence of points and line segments
$$
\mseq{p_1,(p_1,p_2),p_2,(p_2,p_3), \ldots , p_t,(p_t,p_1)}
$$
forms a simple polygon.
Let $P = \mseq{p_1,p_2,\ldots,p_t}$ be a simple polygon of $S$.
We denote the set of vertices of $P$ by $\vertices(P)$.
We suppose that the vertices on $P$ appear in counterclockwise order starting from the lower-left vertex $p_1$ of $\vertices(P)$.
We denote by $\inside(P) \subseteq S$ and $\outside(P) \subseteq S$ the sets of the points inside and outside $P$, respectively.
We denote by $p_i \prec p_j$ if $i < j$ holds,
and we say that $p_j$ is \emph{larger than} $p_i$ on $P$.
$\prede(p_i)$ and $\succe(p_i)$ denote the predecessor and successor of
$p_i$ of $P$, respectively.
Note that the successor of $p_t$ is $p_1$.
A vertex $p_i$ of $P$ is 
\emph{embeddable} if the triangle $\triangle(\prede(p_i),p_i,\succe(p_i))$ does not intersect the interior of $P$.
An \emph{embedment} of an embeddable point $p_i$ of $P$
is to remove two edges $(\prede(p_i),p_i)$ and $(p_i,\succe(p_i))$ and insert the edge $(\prede(p_i),\succe(p_i))$.
Note that we have a simple polygon after an embedment operation to an embeddable point. 
We denote by $\embed(P,p_i)$ the simple polygon
obtained from $P$ by applying the embedment of $p_i$ to $P$. Note that some points of $\outside(P)$ may be included in the interior of the resulting polygon $\embed(P,p_i)$.
See \figurename~\ref{fig:embedment} for examples.
\begin{figure}[tb]
	\centering
	\includegraphics[scale=.5]{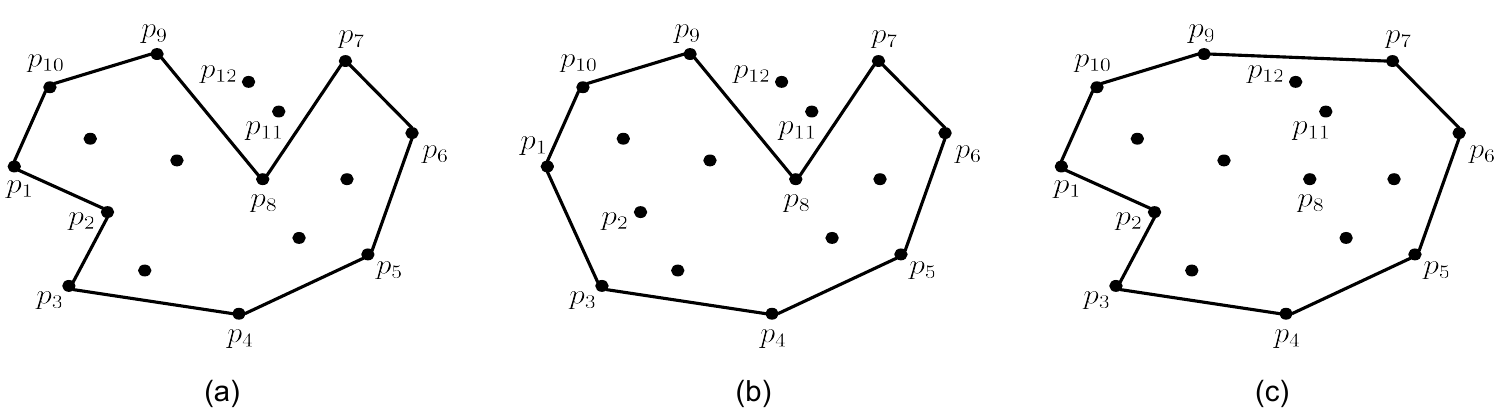}
	\caption{(a) A simple polygon of a point set, where $p_2$ and $p_8$ are embeddable points. (b) The simple polygon obtained from the polygon of (a) by applying an embedment to $p_2$. 
    (c) The simple polygon of $S$ obtained from the polygon of (a) by applying an embedment to $p_8$.}
    \label{fig:embedment}
\end{figure}

We say that a simple polygon of $S$ is \emph{convex} if every vertex of the polygon is non-embeddable.
The \emph{convex hull} of $S$, denoted by $\convexhull(S)$, is the simple polygon with the smallest area that contains all the points in $S$. 
Note that the convex hull of $S$ is a convex polygon of $S$.
Let $P$ be a convex polygon of $S$, and let $p_i$ be a vertex of $P$. 
A line $l$ passing through $p_i$ \emph{supports} $P$ if the entire $P$ lies on one side of the line $l$ \cite{Preparata12}. 
The line $l$ is called a \emph{tangent} to $P$ at $p_i$. 
See Figure~\ref{fig:tangents} for an illustration. 
For a point $p\in \outside(P)$, a vertex $p_i$ of $P$ is a \emph{tangency vertex} for $p$ if the line containing the line segment $(p,p_i)$ supports $P$.
The line is called the \emph{tangent line} from $p$ to $P$.
Note that there are exactly two tangent lines (and two tangency vertices) from any point in $\outside(P)$.

\begin{lemma}[\cite{Preparata79}]
Let $P$ be a convex polygon of $S$.
One can preprocess the polygon $P$ in $\order(n\log n)$ time and $\order(n)$ space so that, given a point $p\in \outside(P)$, the tangent lines from $p$ to $P$ can be computed in $\order(\log n)$ time.
\label{lem:tangents}
\end{lemma}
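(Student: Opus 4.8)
The plan is to reduce each of the two required tangent computations to a binary search over the cyclic vertex sequence of the convex polygon $P$. For the preprocessing, I would simply store the counterclockwise vertex sequence $\mseq{p_1,\ldots,p_m}$ of $P$ (with $m\le n$) in an array, indexed cyclically and anchored at a canonical vertex such as the lower-left one; since $P$ is already presented with its vertices in counterclockwise order, this costs $\order(n)$ time and space, well within the claimed $\order(n\log n)$ and $\order(n)$ bounds (the $\order(n\log n)$ allowance would cover first recovering this order by a convex-hull-style sort, should it be needed).

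Now fix a query point $p\in\outside(P)$. Using the orientation predicate $\mathrm{orient}(a,b,c)$, defined as the sign of the signed area of $\triangle(a,b,c)$, call a vertex $p_i$ a \emph{tangency vertex} for $p$ when $\mathrm{orient}(p,\prede(p_i),p_i)$ and $\mathrm{orient}(p,p_i,\succe(p_i))$ have the same sign; this is exactly the condition that all of $P$ lies on one side of the line through $p$ and $p_i$, i.e.\ that this line supports $P$. The structural fact I would establish is that, because $P$ is convex, reading the sign of $\mathrm{orient}(p,p_i,\succe(p_i))$ as $i$ runs cyclically over $1,\ldots,m$ produces the pattern $+\cdots+\,-\cdots-$ with exactly two sign changes, and these two changes occur precisely at the two tangency vertices, which therefore split the boundary of $P$ into the arc visible from $p$ and the arc hidden from $p$. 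This follows because the directed edges of a convex polygon turn monotonically; general position (the point $p\in S$ is not collinear with any two vertices of $P$) ensures none of these predicates vanishes, so the sign pattern is well defined.

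Given this monotone, ``staircase'' sign pattern, each tangency vertex is found by a textbook binary search on the array: at a pivot vertex $p_j$ one evaluates the orientation predicates relating $p$ to $p_j$ and its two neighbours, and a constant-size case analysis on their signs decides whether $p_j$ is itself the sought tangency vertex and, if not, whether the next sign change lies clockwise or counterclockwise from $p_j$; one then recurses on the corresponding half of the current arc. Each search takes $\order(\log m)=\order(\log n)$ time, so both tangent lines are reported in $\order(\log n)$ time total.

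I expect the main obstacle to be making the binary search provably correct rather than merely plausible: one must verify that the local orientation tests at the pivot always suffice to choose the correct half, and in particular handle the cyclic wrap-around — for this I would first perform a constant-time test that positions the query point $p$ relative to the anchor vertex and thereby ``cuts'' the cyclic sequence into an ordinary linear one on which the sign pattern is genuinely monotone. Once the $+\cdots+\,-\cdots-$ structure and this initial orientation step are pinned down, the remainder is the routine invariant maintenance of binary search.
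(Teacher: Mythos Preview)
The paper does not prove this lemma at all; it simply cites Preparata (1979) and states the result without argument. Your sketch is essentially the classical argument behind that citation: store the cyclic vertex sequence of the convex polygon in an array and binary-search for the two sign changes of the edge-orientation predicate relative to the query point. So there is nothing to compare against, and your approach is the standard, correct one.

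One local slip worth fixing: you say $p_i$ is a tangency vertex when $\mathrm{orient}(p,\prede(p_i),p_i)$ and $\mathrm{orient}(p,p_i,\succe(p_i))$ have the \emph{same} sign. But the supporting-line condition is that $\prede(p_i)$ and $\succe(p_i)$ lie on the same side of the line through $p$ and $p_i$, i.e.\ that $\mathrm{orient}(p,p_i,\prede(p_i))$ and $\mathrm{orient}(p,p_i,\succe(p_i))$ agree in sign; since $\mathrm{orient}(p,\prede(p_i),p_i)=-\mathrm{orient}(p,p_i,\prede(p_i))$, the two quantities you actually wrote have \emph{opposite} signs at a tangency vertex. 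This is in fact consistent with your own next sentence, where you (correctly) place the tangency vertices at the two sign changes of the sequence $\mathrm{orient}(p,p_i,\succe(p_i))$. Swap ``same'' for ``opposite'' and the rest of the argument goes through as written.
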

\begin{figure}[tb]
    \centering\includegraphics[scale=.55]{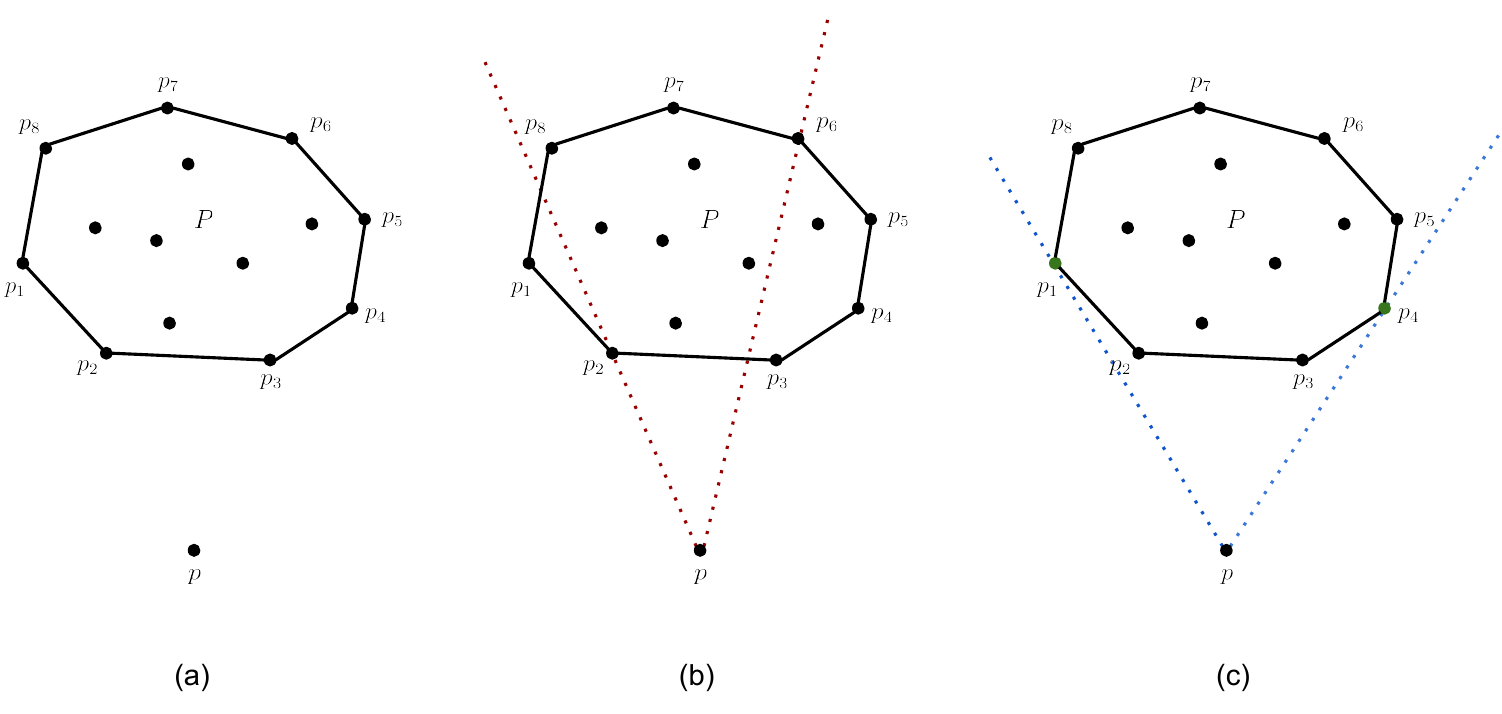}
    \caption{(a) A simple polygon $P$ with an outside point $p$. (b) The line through $p$ and $p_2$ (resp. $p_6$) is not a tangent line at $p_2$ (resp. $p_6$). (c) The lines defined by $(p,p_1)$ and $(p,p_4)$ are two tangent lines of $P$ passing through $p$.}
    \label{fig:tangents}
\end{figure}
%
% Insertable and insertion operation
%
We denote the two tangency vertices of $P$ for $p \in \outside(P)$ by $\tang_1(P,p)$ and $\tang_2(P,p)$.
We denote the region bounded by the two line segments $(p, \tang_1(P,p))$, $(p,\tang_2(P,p))$ and the edges of $P$ inside $\triangle(p,\tang_1(P,p),\tang_2(P,p))$ or on the boundary of the triangle by $\region(P,p)$.
Note that, if $\tang_1(P,p)$ is adjacent to $\tang_2(P,p)$ by an edge of $P$, from the definition, we have $\region(P,p)=\triangle(p,\tang_1(P,p),\tang_2(P,p))$.
The point $p \in \outside(P)$ is \emph{insertable} to $P$ if $\region(P,p)$ does not include any point in $\outside(P)$.
See \figurename~\ref{fig:insertable}.
An \emph{insertion} of an insertable point $p$ to $P$ is to remove the edges inside $\triangle(p,\tang_1(P,p),\tang_2(P,p))$ or on the boundary of the triangle
%between $\tang_1(P,p)$ and $\tang_2(P,p)$ 
and insert the segments $(p, \tang_1(P,p))$ and $(p,\tang_2(P,p))$ as two new edges. 
See \figurename~\ref{fig:pc-ins} for an example.
We denote by $\insertion(P,p)$ the simple polygon obtained from $P$ by applying the insertion of $p$.
\begin{figure}[h]
    \centering
    \includegraphics[scale=.65]{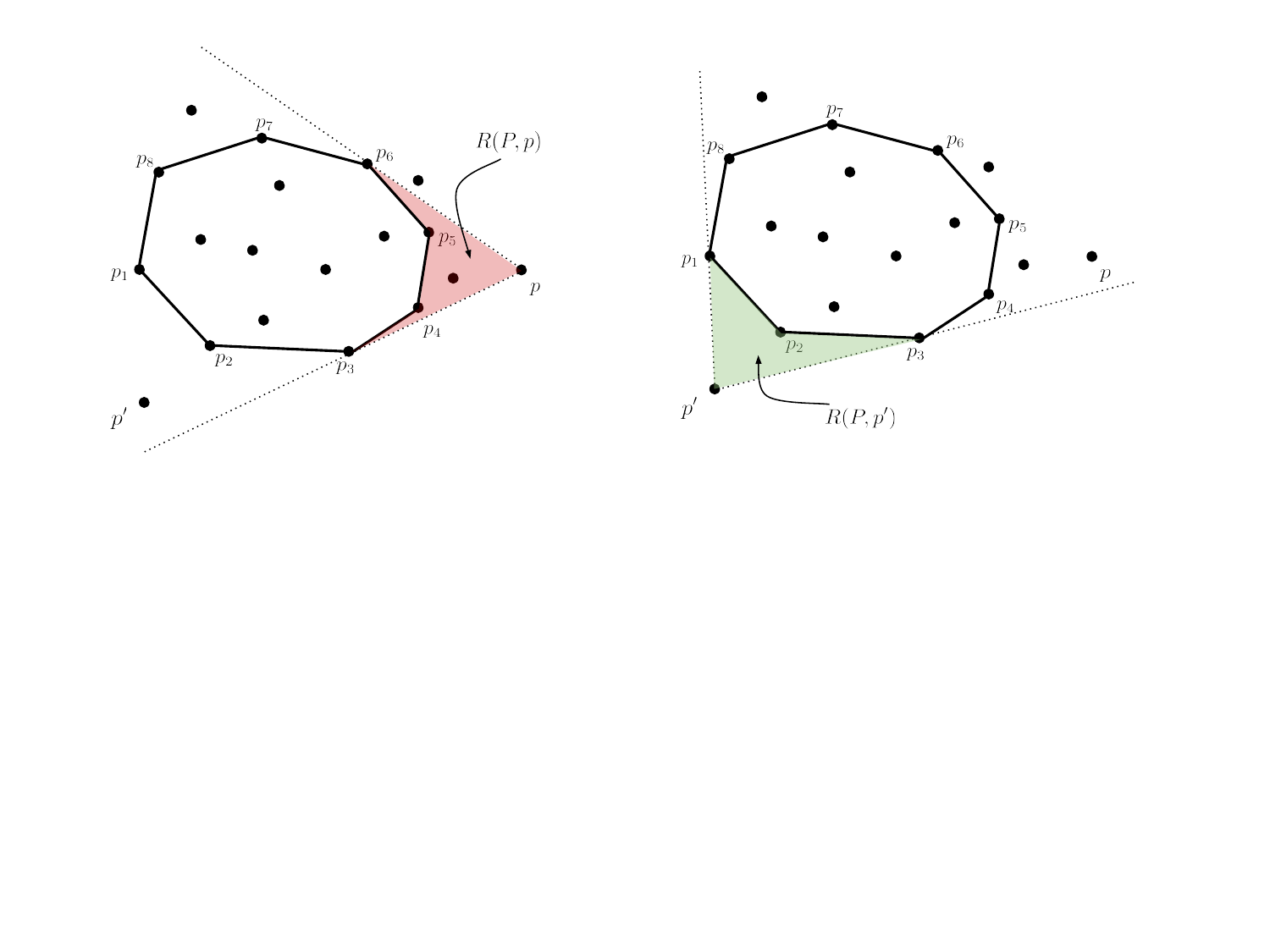}
    \caption{(a) The point $p$ is not insertable to the polygon $P$ as the region $R(P,p)$ (colored red) contains an outside point. (b) The point $p'$ is insertable to $P$.}
    \label{fig:insertable}
\end{figure}

\begin{figure}[h]
\centering
\includegraphics[scale=.35]{./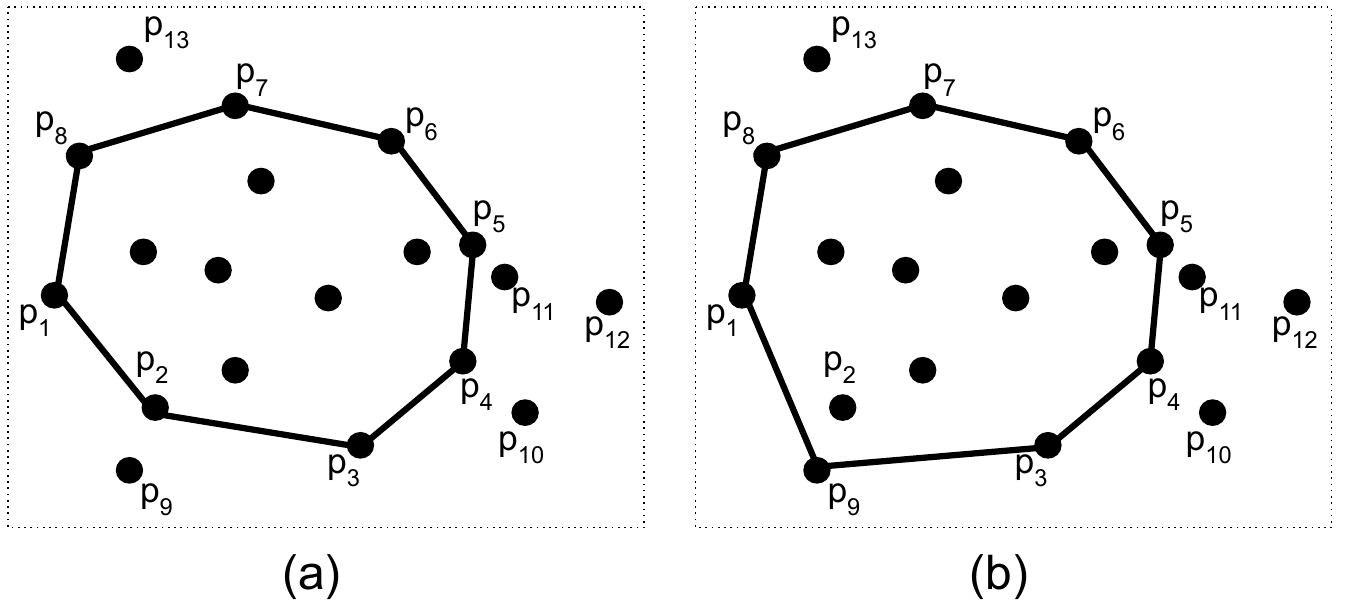}
\caption{(a) A convex polygon of a point set with $\outside(P)=\mset{p_9, p_{10}, p_{11}, p_{12}, p_{13}}$. (b) The polygon obtained from the polygon of (a) by inserting $p_{9}$.}
\label{fig:pc-ins}
\end{figure}

\begin{lemma} \label{lem:existence_ins}
Let $S$ be a set of points in the Euclidean plane,
and let $P$ be a convex polygon of $S$ except 
%
%$\mset{\convexhull(S)}$.
%
$\convexhull(S)$.
Then, there exists an insertable point of $P$.
\end{lemma}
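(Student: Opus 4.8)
The plan is an extremal argument: among the outside points of $P$, pick one, $p^\ast$, whose ``pocket'' $\region(P,p^\ast)$ is inclusion-minimal, and show it must be insertable. The engine is a \emph{descent lemma}: if an outside point $p$ is not insertable, then some other outside point has a strictly smaller pocket. First note $\outside(P)\neq\emptyset$: otherwise every point of $S$ is a vertex of $P$ or inside $P$, so $S\subseteq P$, hence $\convexhull(S)\subseteq P$ by convexity of $P$, while $P=\convexhull(\vertices(P))\subseteq\convexhull(S)$ because $\vertices(P)\subseteq S$; thus $P=\convexhull(S)$, a contradiction. For $p\in\outside(P)$ write $t_1=\tang_1(P,p)$, $t_2=\tang_2(P,p)$, and let $N_p$ be the chain of edges of $P$ inside $\triangle(p,t_1,t_2)$; so $\partial P\cap\triangle(p,t_1,t_2)=N_p$, its endpoints are $t_1,t_2$, and $\region(P,p)=\triangle(p,t_1,t_2)\setminus\mathrm{int}(P)$.

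For the descent lemma, suppose $p\in\outside(P)$ is not insertable and pick $q\in\region(P,p)\cap\outside(P)$ with $q\neq p$; then $q\in\triangle(p,t_1,t_2)$. The crux is to show that both tangency vertices $\tang_1(P,q),\tang_2(P,q)$ lie on $N_p$ (equivalently, in $\triangle(p,t_1,t_2)$). Granting this, $q,\tang_1(P,q),\tang_2(P,q)$ all lie in the convex set $\triangle(p,t_1,t_2)$, so $\triangle(q,\tang_1(P,q),\tang_2(P,q))\subseteq\triangle(p,t_1,t_2)$ and hence $\region(P,q)=\triangle(q,\tang_1(P,q),\tang_2(P,q))\setminus\mathrm{int}(P)\subseteq\region(P,p)$; the inclusion is strict since $p$ is a vertex of $\triangle(p,t_1,t_2)$ with $p\notin\{q,\tang_1(P,q),\tang_2(P,q)\}$ (as $q\neq p$, and the tangency vertices lie in $\vertices(P)$ while $p\notin P$), so $p\in\region(P,p)\setminus\region(P,q)$.

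To prove the crux I would proceed in two steps. (i) \emph{The point $x^\ast\in P$ nearest to $q$ lies on $N_p$.} If not, $x^\ast$ lies strictly on the far side of the line $\ell$ through $t_1,t_2$, while $q\in\triangle(p,t_1,t_2)$ lies on the near side, so the open segment $(q,x^\ast)$ crosses $\ell$ at a point $z$; since $(q,x^\ast)$ is disjoint from $P$ (as $x^\ast$ is the nearest point) it avoids the chord $[t_1,t_2]\subseteq P$, so $z$ lies beyond $t_1$ or beyond $t_2$ on $\ell$, say beyond $t_1$. Then $z$ lies strictly on the non-$P$ side of the supporting line $\ell(p,t_1)$, whereas $q$ and $x^\ast$ lie on its closed $P$-side ($q\in\triangle(p,t_1,t_2)$ and $x^\ast\in P$), contradicting $z\in(q,x^\ast)$. (ii) \emph{Hence $N_q\subseteq N_p$}, where $N_q$ is the analogous chain of $q$: the nearest point $x^\ast$ is visible from $q$, so $x^\ast\in N_q\cap N_p$; if $N_q$ ran past an endpoint $t_i$ of $N_p$ then $t_i$ would be an interior vertex of $N_q$, hence a vertex of $P$ lying in $\mathrm{int}\!\bigl(\convexhull(\vertices(P)\cup\{q\})\bigr)$, but $\convexhull(\vertices(P)\cup\{q\})\subseteq\convexhull(\vertices(P)\cup\{p\})$ (because $q\in\triangle(p,t_1,t_2)\subseteq\convexhull(\vertices(P)\cup\{p\})$) and $t_i$ is a vertex, not an interior point, of $\convexhull(\vertices(P)\cup\{p\})$, a contradiction. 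So $N_q\subseteq N_p$ and in particular $\tang_1(P,q),\tang_2(P,q)\in N_p$. Finally, choosing $p^\ast\in\outside(P)$ with $\region(P,p^\ast)$ minimal under inclusion (possible, as there are finitely many such pockets and $\outside(P)\neq\emptyset$): were $p^\ast$ not insertable, the descent lemma would contradict minimality, so $p^\ast$ is insertable.

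The main obstacle is the crux of the descent lemma — confining the tangency vertices of $q$ to the pocket of $p$ — and within it the two steps above; everything else is short bookkeeping. One should also check separately the degenerate configuration where $N_p$ or $N_q$ consists of a single edge (so step (ii) offers no interior vertex to exploit): there one argues directly from step (i) that the edge $N_q=[\tang_1(P,q),\tang_2(P,q)]$ lies on $N_p$, using that $p$ — and hence the whole triangle $\triangle(p,t_1,t_2)$ containing $q$ — lies on the $P$-side of the line through that edge.
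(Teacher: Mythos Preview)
Your proof is correct and follows essentially the same descent strategy as the paper: start from an outside point, and if it is not insertable, pass to an outside point inside its pocket, whose own pocket is strictly smaller. The paper's version is much terser—it simply asserts the key containment $\region(P,p')\subseteq\region(P,p)$ (``note that $\region(P,p')$ is included in $\region(P,p)$'') without justification—whereas you supply a careful geometric argument for this inclusion and for its strictness.
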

\begin{proof}
We can find an insertable point of $P$ in the following way.
Let $p$ be a point in $\outside(P)$.
If $p$ is insertable, then we are done.
Now, we assume otherwise.
Then, there is a point inside $\region(P,p)$.
Let $p'$ be a point inside $\region(P,p)$.
If $p'$ is insertable, then we are done.
Otherwise, we do the same process in $\region(P,p')$.
Here, note that $\region(P,p')$ is included in $\region(P,p)$.
By repeating the above process, we finally have an insertable point of $P$.
\end{proof}

We denote by $\inspset(P)$ the set of insertable points of $P$.
We denote the lower-left point of $\inspset(P)$ as $\lowerleft(\inspset(P))$.

A simple polygon $P$ of $S$ is an \emph{at most $k$-out polygon} of $S$ if $\msize{\outside(P)} \leq k$ holds.
\figurename~\ref{fig:k-out} shows examples of at most $k$-out polygons.
We denote the set of the at most $k$-out polygons of $S$ by $\setkout(S)$.

%
% Section
%
\section{Family tree of at most $k$-out polygons}\label{sec:ft}

Let $S$ be a set of $n$ points in the Euclidean plane. %and assume that they are in general position.
%In Section~\ref{sec:ft}, 
In this section, we define a rooted tree structure on $\setkout(S)$, called a family tree.
By traversing the family tree, we enumerate all the polygons in $\setkout(S)$.
In Section~\ref{sec:algo}, we design an algorithm to traverse the family tree.

Let $P=\mseq{p_1,p_2,\ldots ,p_t}$ ($t\leq n$) be a polygon in $\setkout(S) \setminus \mset{\convexhull(S)}$.
Remember that $p_1$ is the lower-left vertex of 
%$\mset{p_1,p_2,\ldots,p_t}$ 
$\vertices(P)$ and the vertices on $P$ are arranged in the counterclockwise order.
We denote by $\largest(P)$ the largest embeddable point of $P$. For convenience, we define $\largest(P)=\emptyset$ if $P$ is convex.
Now, to define a rooted tree structure,
we define the \emph{parent} $\parent(P)$ of $P$, as follows.

\begin{mycase}{1}{$P$ is not a convex polygon.}
In this case, $P$ has an embeddable point.
We define the \emph{parent} of $P$ as the polygon obtained by applying an embedment to the largest embeddable point $\largest(P)$. %(Note that some points in $\outside(P)$ may turn out to be the ones in $\inside(P)$). 
In \figurename~\ref{fig:embedment}, the polygon in \figurename~\ref{fig:embedment}(c) is the parent of the polygon in \figurename~\ref{fig:embedment}(a). 
However, the polygon in \figurename~\ref{fig:embedment}(b) is not the parent of the polygon \figurename~\ref{fig:embedment}(a).
\end{mycase}

\begin{mycaselast}{2}{$P$ is a convex polygon.}
Since $P \neq \convexhull(S)$, $\inspset(P) \neq \emptyset$ holds from Lemma~\ref{lem:existence_ins}.
Then, we define the \emph{parent} $\parent(P)$ of $P$ as the polygon obtained from $P$ by applying an insertion to the point $\lowerleft(\inspset(P))$ to $P$.
Remember that $\lowerleft(\inspset(P))$ is the lower-left point among the set of the insertable points to $P$.
See \figurename~\ref{fig:pc-ins} for an example.
In the figure, the polygon in \figurename~\ref{fig:pc-ins}(b) is the parent of the polygon in \figurename~\ref{fig:pc-ins}(a).
\end{mycaselast}

%We denote the parent of $P$ by $\parent(P)$.
From the above case analysis, the parent $\parent(P)$ of $P$ is defined as follows. 
$$
\parent(P) = 
\begin{cases}
   \embed(P,\largest(P)) & \text{$P$ is not a convex polygon~(Case~1)} \\
   \insertion(P,\lowerleft(\inspset(P))) & \text{otherwise~(Case~2)}.
\end{cases}
$$
We say that $P$ is a \emph{child} of $\parent(P)$.
In Case~1, $\parent(P)$ is either a convex or non-convex polygon and we say that $P$ is a \emph{Type-1 child} of $\parent(P)$.
In Case~2, $\parent(P)$ is always a convex polygon and we say that $P$ is a \emph{Type-2 child} of $\parent(P)$.

\begin{lemma}\label{lem:par-unq}
Let $P$ be a polygon in $\setkout(S) \setminus \mset{\convexhull(S)}$, where $S$ is a set of points in Euclidean plane.
Then, the parent $\parent(P)$ of $P$ is an at most $k$-out polygon of $S$, always exists and is unique.
\end{lemma}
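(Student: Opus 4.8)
The plan is to establish the three assertions---that $\parent(P)$ exists, belongs to $\setkout(S)$, and is unique---by examining separately the two cases appearing in the definition of $\parent(P)$. Since every simple polygon of $S$ is either convex or not, these two cases partition $\setkout(S)\setminus\mset{\convexhull(S)}$, so it suffices to argue within each; once both are handled, $\parent$ has been exhibited as a well-defined (single-valued) function on this set, which is exactly the ``unique'' part.

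For existence I would proceed as follows. In Case~1 ($P$ not convex), the very definition of convexity---every vertex non-embeddable---forces $P$ to have at least one embeddable vertex, so $\largest(P)$ is not $\emptyset$; since $\prec$ linearly orders $\vertices(P)$ it is in fact the unique largest embeddable vertex, and the remark following the definition of embedment guarantees that $\embed(P,\largest(P))$ is again a simple polygon of $S$. Hence $\parent(P)=\embed(P,\largest(P))$ is well defined and determined by $P$ alone. In Case~2 ($P$ convex and $P\neq\convexhull(S)$), Lemma~\ref{lem:existence_ins} provides $\inspset(P)\neq\emptyset$, and because $<_{\mathrm{xy}}$ linearly orders $S$ the point $\lowerleft(\inspset(P))$ is the unique lower-left insertable point; an insertion again yields a simple polygon of $S$, so $\parent(P)=\insertion(P,\lowerleft(\inspset(P)))$ is well defined and unique. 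Uniqueness is thus essentially immediate in both cases: the operated-on point is singled out by a total order and the operations $\embed(\cdot,\cdot)$ and $\insertion(\cdot,\cdot)$ are deterministic.

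The substantive step is showing $\parent(P)\in\setkout(S)$, i.e.\ that $\msize{\outside(\parent(P))}\le k$; the idea is that both operations can only enlarge the enclosed region. In Case~1, embeddability of $p_i:=\largest(P)$ means $\triangle(\prede(p_i),p_i,\succe(p_i))$ does not meet the interior of $P$, so replacing the two edges at $p_i$ by the chord $(\prede(p_i),\succe(p_i))$ amounts to gluing that triangle onto $P$; hence the interior only grows, every point of $\inside(P)$ stays inside, $p_i$ itself moves inside, and therefore $\outside(\embed(P,p_i))\subseteq\outside(P)$, giving $\msize{\outside(\parent(P))}\le\msize{\outside(P)}\le k$. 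In Case~2, insertability of $p:=\lowerleft(\inspset(P))$ means $\region(P,p)$ contains no outside point other than $p$ itself; the insertion absorbs exactly $\region(P,p)$ into the polygon and promotes $p$ to a vertex, leaving the inside/outside status of every other point of $S$ unchanged, so $\outside(\insertion(P,p))=\outside(P)\setminus\mset{p}$ and again $\msize{\outside(\parent(P))}\le k$. In either case $\parent(P)\in\setkout(S)$.

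I expect the main obstacle to be this last geometric bookkeeping---carefully verifying that an embedment glues on the ear triangle (and an insertion absorbs exactly $\region(P,p)$) without ever shrinking the region or moving an interior point to the outside, and that ``insertable'' is precisely the hypothesis needed to keep every outside point except $p$ from being swallowed. Everything else reduces to the definitions and to Lemma~\ref{lem:existence_ins}.
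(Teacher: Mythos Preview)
Your argument is correct and follows the same approach as the paper: the paper's own proof is simply ``Immediate from the definition of the parents,'' and you have carefully unpacked exactly what that entails in each case. Your detailed verification that $\outside(\parent(P))\subseteq\outside(P)$ in both cases is a welcome expansion of what the paper leaves implicit.
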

\begin{proof}
Immediate from the definition of the parents.
\end{proof}

The \emph{parent sequence} of a given polygon $P\in \setkout{(S)}$, denoted by $\pseq(P)$, is defined as the sequence of polygons obtained by recursively tracing the parents of $P$. 
More precisely,
$\pseq(P)=\langle P_1, P_2,\ldots, P_\ell\rangle$, where $P_1=P$ and $P_i=\parent(P_{i-1})$ for all $i=2,3,\ldots, \ell$. Figure~\ref{fig:pseq} illustrates an example.
\begin{figure}[h]
 	\centering
 	\includegraphics[scale=.5]{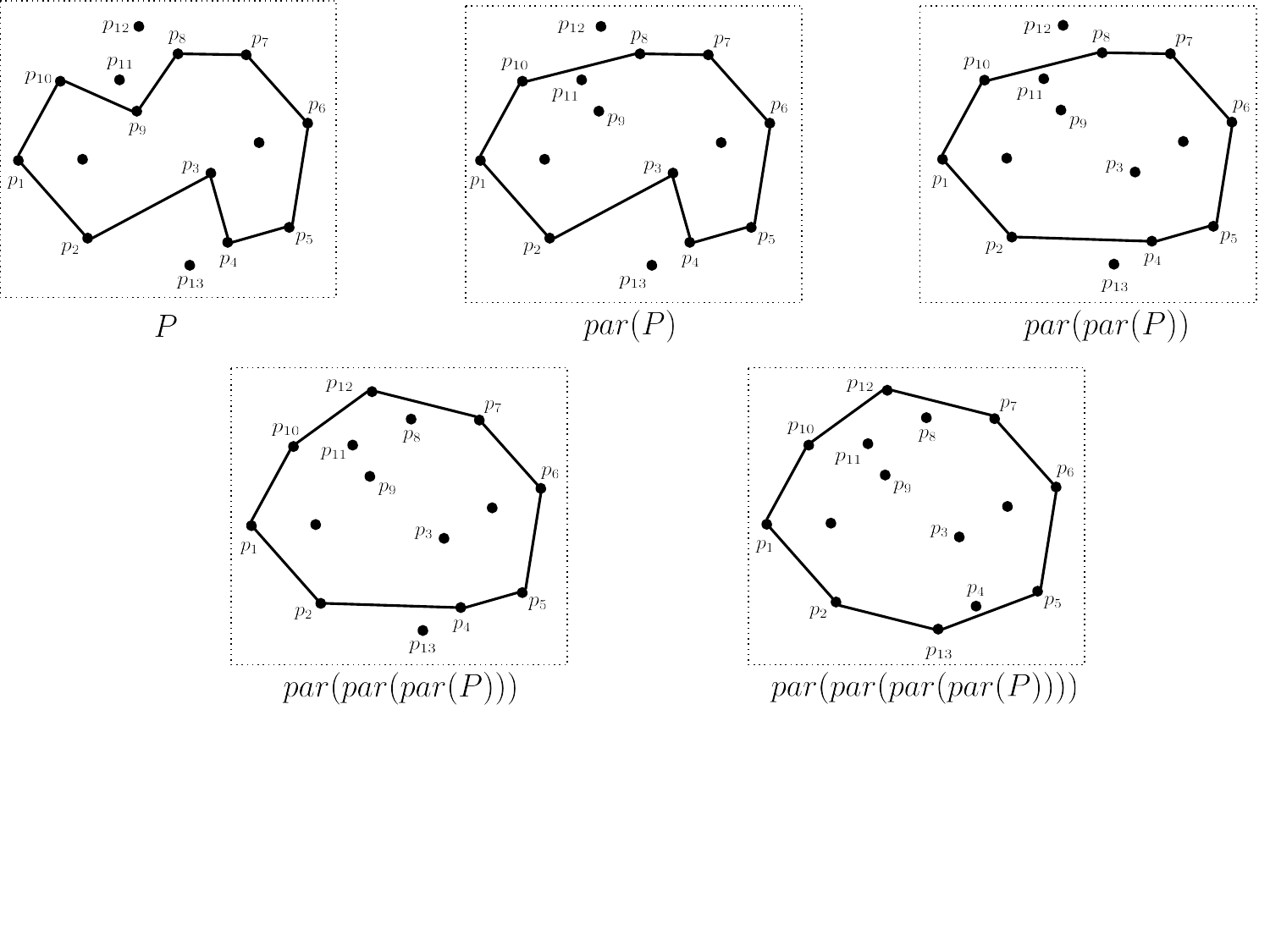}
 	\caption{The parent sequence $\pseq(P)$ of a polygon $P$.}
     \label{fig:pseq}   
 \end{figure}
\begin{lemma}\label{lem:pseq}
    For any polygon $P\in \setkout(S)$, the last polygon in the parent sequence $\pseq(P)$ is $\convexhull(S)$.
\end{lemma}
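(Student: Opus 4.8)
The plan is to exhibit a well-founded measure on $\setkout(S)$ that strictly decreases under the map $\parent(\cdot)$, deduce that $\pseq(P)$ must be finite, and then pin down its last element. The second half is immediate from the earlier results: $\parent$ is defined on every polygon of $\setkout(S) \setminus \mset{\convexhull(S)}$ and, by Lemma~\ref{lem:par-unq}, maps it back into $\setkout(S)$; moreover $\convexhull(S) \in \setkout(S)$ because $\outside(\convexhull(S)) = \emptyset$. Hence every polygon of $\setkout(S)$ other than $\convexhull(S)$ has a well-defined parent lying in $\setkout(S)$, so a \emph{finite} parent sequence $\pseq(P) = \mseq{P_1, \ldots, P_\ell}$ (with $P_1 = P$ and $P_{i} = \parent(P_{i-1})$) can terminate only at $P_\ell = \convexhull(S)$. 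Thus the whole statement reduces to showing that $\pseq(P)$ is finite, i.e., that iterating $\parent$ cannot cycle.

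For the measure I would take $\phi(Q) = (\msize{\outside(Q)},\, \msize{\vertices(Q)}) \in \Z_{\ge 0} \times \Z_{\ge 0}$ ordered lexicographically (first coordinate significant), and prove $\phi(\parent(Q)) <_{\mathrm{lex}} \phi(Q)$ for every $Q \in \setkout(S) \setminus \mset{\convexhull(S)}$. The geometric fact behind this is that $\parent(Q)$ encloses a larger region than $Q$: in Case~1 the embeddable vertex $\largest(Q)$ is the apex of an outward-pointing ear, since $\triangle(\prede(\largest(Q)), \largest(Q), \succe(\largest(Q)))$ misses the interior of $Q$ and therefore lies on its exterior side, and the embedment glues this triangle onto $Q$; in Case~2 the region $\region(Q, \lowerleft(\inspset(Q)))$ lies on the exterior side of the near boundary chain of the convex polygon $Q$, and the insertion glues it on. In both cases, then, $\outside(\parent(Q)) \subseteq \outside(Q)$. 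In Case~2 the inserted point $\lowerleft(\inspset(Q))$ belongs to $\outside(Q)$ but becomes a vertex of $\parent(Q)$, so $\msize{\outside(\parent(Q))} \le \msize{\outside(Q)} - 1$ and $\phi$ drops already in the first coordinate. In Case~1 the first coordinate may be unchanged, but the embedment deletes exactly the vertex $\largest(Q)$ and adds no new vertex, so $\msize{\vertices(\parent(Q))} = \msize{\vertices(Q)} - 1$ and $\phi$ drops in the second coordinate. Since the lexicographic order on $\Z_{\ge 0} \times \Z_{\ge 0}$ is well-founded, no infinite strictly decreasing $\phi$-chain exists, so $\pseq(P)$ is finite. (Equivalently one could note that the enclosed region grows \emph{strictly} at each step, so the polygons in $\pseq(P)$ are pairwise distinct, and $\setkout(S)$ is a finite set.)

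The main obstacle is the geometric bookkeeping in Case~1: one must check that embedding an embeddable vertex genuinely fills a notch — the vertex itself migrating into the interior — rather than slicing the polygon, and that no point of $\outside(Q)$ gets excluded in the process. Both follow once one argues that a triangle disjoint from the interior of $Q$ but sharing two edges with $\partial Q$ must lie entirely on the exterior side of $Q$, so that $\embed(Q, \largest(Q))$ equals $Q$ together with that triangle; the analogous claim for $\region(Q,\cdot)$ in Case~2 is similar but easier, since $Q$ is convex there. Everything else is routine: the reduction in the first paragraph identifies the terminal polygon as $\convexhull(S)$, and the degenerate case $P = \convexhull(S)$, where $\pseq(P) = \mseq{P}$, is immediate.
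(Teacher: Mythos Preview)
Your argument is correct, and its overall shape---show that $\parent$ strictly increases some measure, hence $\pseq(P)$ is finite, hence it terminates at the unique polygon with no parent, namely $\convexhull(S)$---is exactly the paper's strategy as well. The difference is purely in the choice of measure: the paper uses the area $\area(Q)$ and observes in one line that $\area(Q) < \area(\parent(Q))$ (since in either case the parent is obtained by gluing on a positive-area region) and that $\area(Q) < \area(\convexhull(S))$ for $Q \neq \convexhull(S)$; you instead use the lexicographic pair $(\msize{\outside(Q)}, \msize{\vertices(Q)})$ and argue case-by-case that it strictly decreases. Your parenthetical alternative---the enclosed region grows strictly, so the $P_i$ are pairwise distinct in a finite set---is essentially the paper's proof restated. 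The area measure is more concise and handles both cases uniformly; your combinatorial measure avoids any appeal to areas and makes the two cases (and the precise reason each terminates) more explicit, at the cost of the extra geometric bookkeeping you flag in your final paragraph. Either approach is perfectly adequate here.
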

\begin{proof}
We denote the area of $P$ by $\area(P)$.
Then, we have $\area(P) < \area(\convexhull(S))$ for any polygon $P \in \setkout(S) \setminus \mset{\convexhull(S)}$.
From the definition of the parents, $\area(P) < \area(\parent(P))$ holds.
Hence, the claim holds.
\end{proof}

Lemma~\ref{lem:pseq} implies that we finally have the convex hull by finding parents repeatedly.
Hence, it can be observed that we have a tree structure rooted at $\convexhull(S)$ by merging the parent sequences of all the polygons of $\setkout(S)$.
We call such a tree the \emph{family tree} of $\setkout(S)$ and denote it by $\ftree(S)$. 
See Figure~\ref{fig:familytree} for an example.
\begin{figure}[h]
	\centering
	\includegraphics[width=14cm]{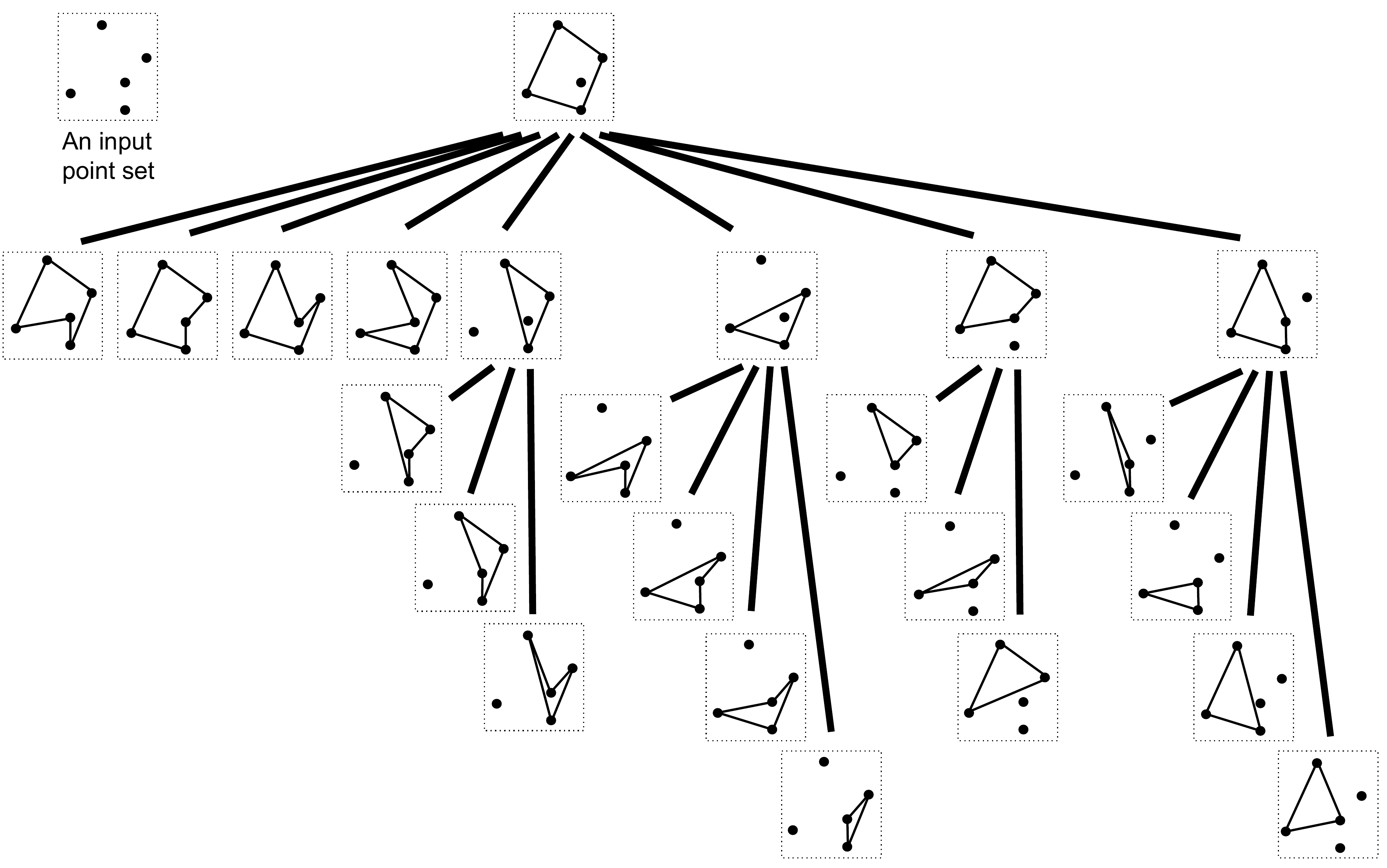}
	\caption{An example of a family tree of $\setkout(S)$,
		where the input point set $S$ is illustrated in the upper left of the figure and $k=2$.}
	\label{fig:familytree}
\end{figure}

\section{Enumeration algorithm of at most $k$-out polygons}\label{sec:algo}

In this section, we design a child-enumeration algorithm of a given at most $k$-out polygon.
By applying the child-enumeration algorithm recursively starting from the convex hull of a given point set, we traverse the family tree and enumerate all the at most $k$-out polygons.
%
%We denote the sets of the Type-1 and Type-2 children of a polygon $P \in \setkout(S)$ by $\csetone(P)$ and $\csettwo(P)$, respectively.
Below, we show how to enumerate Type-1 and Type-2 children of $P$, respectively.

Let $S$ be a set of $n$ points in Euclidean plane, 
and let $P=\mseq{p_1,p_2,\ldots ,p_t}$ be a polygon in $\setkout(S)$.

\subsection{Type-1 children}
%We denote the set of the Type-1 children of $P$ by $\csetone(P)$.
A pair $(p_i,p)$ of a vertex $p_i \in V(P)$ and a point $p\in \inside(P)$ is \emph{digable} if the triangle %formed by $p,p_i,$ and $\succe(p_i)$ 
$\triangle(p,p_i,\succe(p_i))$
lies inside $P$ and there are at most $k-\msize{\outside(P)}$ points inside the triangle.
A \emph{dig operation} to a digable pair $(p_i,p)$ removes the edge $(p_i, \succe(p_i))$ and inserts the two edges $(p_i, p)$ and $(p, \succe(p_i))$.
We denote the resulting polygon by $\dig(P,p_i,p)$.
Note that $\dig(P,p_i,p)$ for a digable pair $(p_i,p)$ is also a polygon in $\setkout(S)$.
Intuitively, a dig operation makes an at most $k$-out polygon include an inside point as a vertex of $P$. 
The dig operation is the inverse of an embedment operation. 
The polygon $\dig(P,p_i,p)$ is possibly a child of $P$.
More precisely, $\dig(P,p_i,p)$ is a child of $P$ if $P=\parent(\dig(P,p_i,p))$ holds. 
We say that a digable pair $(p_i,p)$ is \emph{active} if $\dig(P,p_i,p)$ is a child of $P$.

If we have the set of the active pairs of $P$, 
it is easy to enumerate all the Type-1 children of $P$.
In this subsection, we show that one can construct such set in $\order(n^2 \log{n})$ time.
The following lemma rephrases the definition of the active pairs and is useful for checking the activity of a digable pair.

\begin{lemma}\label{lem:ischild}
	Let $P$ be a polygon in $\setkout(S)$, where $S$ is the set of $n$ points in Euclidean plane.
	Let $(p_i, p)$ be a digable pair, where $p_i$ is a vertex of $P$ and $p \in \inside(P)$. Then, $(p_i,p)$ is active 
    %if
    if and only if $p = \largest(\dig(P,p_i,p))$ holds.
	\label{lem:dig-act}
\end{lemma}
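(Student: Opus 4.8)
The plan is to unwind both sides against the definition of the parent map from Section~\ref{sec:ft}. Write $P' = \dig(P,p_i,p)$. Since $(p_i,p)$ is digable, $P'$ is a simple polygon in $\setkout(S)$ in which $p$ is a vertex, the point $p$ lies strictly inside the convex hull of its neighbourhood so $P'$ is not convex, and $P'$ has $p_i = \prede_{P'}(p)$ and $\succe_P(p_i) = \succe_{P'}(p)$ as the neighbours of $p$. The key observation is that $p$ is an embeddable vertex of $P'$ and $\embed(P',p) = P$: the triangle $\triangle(p_i,p,\succe_P(p_i))$ is exactly the triangle the dig operation added, which by digability lies inside $P$, hence does not meet the interior of $P'$, so $p$ is embeddable in $P'$; and embedding it removes the two edges $(p_i,p),(p,\succe_P(p_i))$ and re-inserts $(p_i,\succe_P(p_i))$, recovering $P$. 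So in the terminology of Section~\ref{sec:ft}, $P'$ is a Type-1 child candidate whose parent is obtained by embedding some vertex, and the only question is whether the \emph{largest} embeddable vertex of $P'$ is $p$.

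First I would prove the ``if'' direction. Suppose $p = \largest(P')$. Since $P'$ is not convex, Case~1 of the parent definition applies, so $\parent(P') = \embed(P',\largest(P')) = \embed(P',p) = P$ by the observation above. Hence $P' = \dig(P,p_i,p)$ is a child of $P$, i.e. $(p_i,p)$ is active, which is exactly the definition.

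For the ``only if'' direction, suppose $(p_i,p)$ is active, i.e. $\parent(P') = P$. Because $P'$ is not convex, $\parent(P')$ is computed by Case~1, so $\parent(P') = \embed(P',\largest(P'))$. Thus $\embed(P',\largest(P')) = P = \embed(P',p)$. It remains to argue this forces $\largest(P') = p$, i.e. that the embedment operation is injective in its second argument on $P'$. This is the one step that needs a small geometric argument: if $q \neq q'$ are two distinct embeddable vertices of $P'$, then $\embed(P',q)$ and $\embed(P',q')$ differ, because $\embed(P',q)$ has vertex set $\vertices(P') \setminus \{q\}$ while $\embed(P',q')$ has vertex set $\vertices(P')\setminus\{q'\}$, and these sets are distinct (a simple polygon is determined up to the chosen starting vertex and orientation, so distinct vertex sets give distinct polygons). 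Therefore $\largest(P') = p$, as claimed.

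I expect the only genuine obstacle to be the injectivity step in the ``only if'' direction, and even there the argument via vertex sets is clean once one invokes the convention (from the Preliminaries) that a simple polygon of $S$ is recorded as a counterclockwise sequence from the lower-left vertex, so that it is uniquely determined by its vertex set together with its embedding in the plane. The remaining pieces --- that $P'$ is non-convex, that $p$ is embeddable in $P'$, and that $\embed(P',p)=P$ --- are immediate from the definitions of digable pair and of the dig and embedment operations, since the dig and the embedment are literally inverse edge-replacement operations on the triangle $\triangle(p_i,p,\succe_P(p_i))$.
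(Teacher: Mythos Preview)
Your proof is correct. The paper does not actually give a proof of this lemma; it introduces the statement with the sentence ``The following lemma rephrases the definition of the active pairs'' and then moves on, treating the result as immediate from the definitions. Your argument spells out precisely the two points the paper leaves implicit: that the newly inserted vertex $p$ is embeddable in $\dig(P,p_i,p)$ with $\embed(\dig(P,p_i,p),p)=P$, so the parent computation for the non-convex polygon $\dig(P,p_i,p)$ falls into Case~1, and that the embedment map is injective in its second argument because $\embed(P',q)$ has vertex set $\vertices(P')\setminus\{q\}$. One small quibble: your early remark that ``$p$ lies strictly inside the convex hull of its neighbourhood so $P'$ is not convex'' is not the right justification under the paper's definition of convexity; the correct (and the one you give a few lines later) is that $p$ is embeddable in $P'$, hence $P'$ is not convex in the sense of the paper.
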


To efficiently enumerate children, 
we use the following two queries.
A \emph{triangular range query} of 3 points $p,q,r \in S$ asks the number of points in $\triangle(p,q,r)$.
The triangular range query can be answered efficiently.
\begin{lemma}[\cite{GoswamiDN2004}]\label{lem:triangle-query}
	A triangular query of a set of $n$ points can be answered in $\order(\log n)$ time with $\order(n^2)$-time preprocessing and $\order(n^2)$-additional space.
\end{lemma}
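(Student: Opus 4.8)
The plan is to reduce a triangular range count to a constant number of one-dimensional angular-interval counts, each answerable in $\order(\log n)$ time, supported by a single structure: the circular order of $S\setminus\{v\}$ by angle around every point $v\in S$. First I would precompute, for each $v\in S$, the points of $S\setminus\{v\}$ sorted by the direction of the ray from $v$, stored with prefix counts so that, given two directions, the number of points whose direction from $v$ lies in a prescribed angular interval is retrieved by binary search in $\order(\log n)$ time. The crucial point is that all $n$ of these angular orders can be produced in $\order(n^2)$ total time and $\order(n^2)$ space (rather than the \naive{} $\order(n^2\log n)$) by building one dual line arrangement; this is the only nontrivial ingredient, and I return to it below.

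Next I would express the count of points strictly inside $\triangle(p,q,r)$ through the three lines supporting its edges. Writing $H_1,H_2,H_3$ for the half-planes bounded by these lines and containing the triangle (so that the triangle is $H_1\cap H_2\cap H_3$), inclusion--exclusion on the complementary half-planes gives
\[
\big|\triangle(p,q,r)\cap S\big| \;=\; n \;-\;\textstyle\sum_{i}\big|\overline{H_i}\cap S\big| \;+\;\textstyle\sum_{v\in\{p,q,r\}}\big|W_v\cap S\big|,
\]
where the triple intersection $\overline{H_1}\cap\overline{H_2}\cap\overline{H_3}$ is empty (it is bounded when nonempty, yet the triangle is the only bounded cell in an arrangement of three lines in general position) and $W_v$ is the wedge vertically opposite to the interior angle at vertex $v$. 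I would then observe that every term on the right is an angular-interval count around a point of $S$: each supporting line passes through two of the query points $p,q,r\in S$, so a half-plane count $|\overline{H_i}\cap S|$ is a half-circle angular count around one of those two points; and each wedge $W_v$ has its apex at the query vertex $v\in S$ and is bounded by the rays toward the other two query vertices, so $|W_v\cap S|$ is an angular-interval count around $v$. Intervals that wrap around $0$ are handled by subtraction, and general position guarantees no point of $S$ lies on any supporting line, so the open-triangle count is obtained with no boundary corrections.

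For the precomputation I would use the standard correspondence between the angular order of $S\setminus\{v\}$ around $v$ and the order in which the dual lines cross the dual line of $v$: constructing the arrangement of the $n$ dual lines in $\order(n^2)$ time and space (by incremental insertion or topological sweep) yields, along each dual line, the sequence of crossings, which in $\order(n)$ additional time per point gives the desired angular order; summing over the $n$ points keeps the total at $\order(n^2)$. Each triangular query then performs $\order(1)$ binary searches for $\order(\log n)$ time overall, while the stored orders and prefix-count arrays occupy $\order(n^2)$ space. The main obstacle is precisely attaining $\order(n^2)$ preprocessing rather than $\order(n^2\log n)$: this hinges on extracting all angular orders from one dual arrangement instead of sorting around each point independently, together with the bookkeeping that turns the inclusion--exclusion identity into clean half-ranges and wedges whose apexes are guaranteed to be input points.
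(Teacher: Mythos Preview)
The paper does not supply a proof of this lemma; it simply cites~\cite{GoswamiDN2004}. Your reconstruction is correct and is essentially the standard technique behind that reference: build one dual arrangement in $\order(n^2)$ time and space to obtain the angular order of $S\setminus\{v\}$ around every $v\in S$, then answer a triangle query by inclusion--exclusion over the three supporting half-planes, each term reducing to an $\order(\log n)$ angular-interval lookup anchored at one of the input vertices $p,q,r$.

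One small slip: you assert that general position guarantees no point of $S$ lies on any supporting line, but the three query vertices $p,q,r\in S$ themselves lie on those lines. This does not break the method---it only forces $\order(1)$ additive corrections for $p,q,r$ in the half-plane and wedge counts---but the ``no boundary corrections'' claim should be weakened accordingly. The remaining ingredients (emptiness of $\overline{H_1}\cap\overline{H_2}\cap\overline{H_3}$ for three lines in general position, identification of the pairwise intersections with the wedges opposite the interior angles, and extraction of all $n$ angular orders from a single dual arrangement in $\order(n^2)$ total time rather than $n$ independent $\order(n\log n)$ sorts) are all sound.
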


A \emph{ray shooting query} asks the first obstacle intersected by a query ray.
The ray shooting query can be answered efficiently.
\begin{lemma}[\cite{ChazelleEGGHSS94,ChazelleG89}]
	Let $S$ be a set of $n$ points.
	A ray shooting query of a polygon of $S$ can be answered in $\order(\log n)$ time with $\order(n\log{n})$-time preprocessing and $\order(n)$-additional space.
\end{lemma}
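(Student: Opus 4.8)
The plan is to realize the stated bounds by combining a triangulation of the query polygon with a balanced hierarchical (geodesic) decomposition, so that a ray shooting query reduces to locating the source of the ray and then walking across only $\order(\log n)$ pieces of the decomposition. Since $P$ is a simple polygon with at most $n$ vertices, I would treat its edges as the obstacles and answer a query ray by reporting the first edge it meets.

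\textbf{Preprocessing.} First I would triangulate the interior of $P$. Using a standard trapezoidal or monotone-decomposition sweep, this takes $\order(n \log n)$ time and $\order(n)$ space, which is exactly where the $\order(n \log n)$ preprocessing bound originates. The dual graph of the triangulation is a tree $T$ with $\order(n)$ nodes, because $P$ is simply connected. I would then build a balanced decomposition of $P$ by repeatedly choosing a \emph{centroid diagonal} of $T$, that is, a triangulation diagonal whose removal splits the remaining polygon into two sub-polygons each containing a constant fraction of the triangles, and recursing on each side. The recursion tree has depth $\order(\log n)$; its separating diagonals are stored together with, for each sub-polygon, the convex chains (geodesics) needed to route a segment across it. The total storage is $\order(n)$.

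\textbf{Query.} Given a ray with source $s$ and direction $d$, I would first locate the triangle containing $s$ by descending the balanced hierarchy in $\order(\log n)$ time. I would then trace the ray through the decomposition: at each recursion node I test the single separating diagonal against the current ray, decide the side on which the ray continues, or detect that the ray exits through a boundary edge, which is then the answer, and descend one level. Determining the crossing inside a geodesic piece is done by binary search on the stored convex chains; with the diagonals organized for fractional cascading the per-level cost is $\order(1)$ after the initial location, so the whole walk costs $\order(\log n)$.

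The crux, and the part I would lean on the cited works \cite{ChazelleEGGHSS94,ChazelleG89} for, is the combinatorial guarantee that a straight segment crosses only $\order(\log n)$ pieces of this balanced geodesic triangulation. This is not obvious: a \naive{} triangulation can be crossed $\Theta(n)$ times by a single segment, and it is precisely the geodesic (shortest-path) structure of the separators, together with the centroid-balancing of $T$, that caps the number of crossings at the depth $\order(\log n)$ of the recursion. I would therefore concentrate the effort on verifying that a straight chord cannot re-enter a sub-polygon once it has left it, and hence meets $\order(1)$ separating geodesics per level; everything else (the triangulation, the centroid decomposition, and the per-crossing binary searches) is routine. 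Assembling these pieces yields the $\order(\log n)$ query time, $\order(n \log n)$ preprocessing, and $\order(n)$ additional space claimed in the lemma.
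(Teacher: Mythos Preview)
The paper does not supply its own proof of this lemma at all: the statement is quoted from the literature with citations to \cite{ChazelleEGGHSS94,ChazelleG89} and is used as a black box. So there is nothing in the paper to compare your argument against.

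That said, your sketch is a faithful outline of the Chazelle--Guibas approach from the cited references: triangulate $P$, take a centroid decomposition of the dual tree to obtain a balanced geodesic triangulation of depth $\order(\log n)$, locate the ray's source, and then trace the ray through $\order(\log n)$ pieces using binary search (with fractional cascading) on the bounding convex chains. You have correctly singled out the only nontrivial ingredient, namely the lemma that a straight segment meets $\order(\log n)$ cells of the balanced geodesic triangulation; your one-line justification (``a chord cannot re-enter a sub-polygon once it has left it'') is the right intuition but would need to be made precise, since the separating geodesics are polygonal chains rather than single diagonals and the argument must bound crossings with those chains, not merely with the centroid diagonals. Apart from tightening that step, nothing is missing from your outline relative to the cited proofs.
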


By using the triangular and ray shooting queries, we can check the activity of a pair $(p_i,p)$, as stated in the following lemma.

\begin{lemma}
	Let $P$ be an at most $k$-out polygon of $S$, where $S$ is the set of $n$ points. 
	Suppose that each of the triangular range and ray shooting queries can be computed in $\order(\log{n})$ time.
	Then, given $\largest(P)$ and
	a pair $(p_i, p)$, where $p_i \in V(P)$ and $p\in \inside(P)$,
	one can check whether $(p_i,p)$ is active in $\order(\log n)$ time.
	\label{lem:active_check_Type1}
\end{lemma}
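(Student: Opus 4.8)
The plan is to reduce the test to a single comparison involving $\largest(P)$ plus one embeddability test in a polygon that differs from $P$ only locally. Set $Q:=\dig(P,p_i,p)$ and $q:=\succe(p_i)$. By Lemma~\ref{lem:dig-act}, $(p_i,p)$ is active iff it is digable and $p=\largest(Q)$; digability costs $\order(\log n)$ to check (two ray-shooting queries to verify that $\triangle(p_i,p,q)$ lies inside $P$, plus one triangular-range query of Lemma~\ref{lem:triangle-query} for the point count), so the real task is to decide $p=\largest(Q)$ within the same budget. Three facts drive this. First, $p$ is always embeddable in $Q$: the dig operation removes exactly the closed triangle $\overline{\triangle(p_i,p,q)}$ from the region of $P$, so $\mathrm{int}(Q)=\mathrm{int}(P)\setminus\overline{\triangle(p_i,p,q)}$, whence the ear triangle $\triangle(p_i,p,q)$ of $p$ in $Q$ does not meet $\mathrm{int}(Q)$; consequently $p=\largest(Q)$ iff no vertex of $Q$ larger than $p$ on $Q$ is embeddable in $Q$. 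Second, since $p\in\inside(P)$ one has $p_1<_{\mathrm{xy}}p$ (the vertical line through $p_1$ meets the region $P$ only at $p_1$, by general position), so $p_1$ is still the lower-left vertex of $Q$; the cyclic order of $Q$ from $p_1$ is $p_1,\dots,p_i,p,q,\dots,p_t$, and the vertices of $Q$ larger than $p$ are exactly $p_{i+1},p_{i+2},\dots,p_t$ (with $p_{i+1}=q$), an empty list when $p_i=p_t$. Third, of those vertices only $q$ has different neighbours in $Q$ than in $P$ (its predecessor becomes $p$ in place of $p_i$); every $p_j$ with $i+2\le j\le t$ keeps both neighbours, hence the same ear triangle $T_j:=\triangle(\prede(p_j),p_j,\succe(p_j))$.

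The crux — and the step I expect to be the main obstacle — is the claim that for $i+2\le j\le t$, the vertex $p_j$ is embeddable in $Q$ iff it is embeddable in $P$. A convex vertex is never embeddable (its ear triangle contains interior points arbitrarily close to the vertex) and stays convex under the change, so assume $p_j$ is reflex. Since $\mathrm{int}(Q)\subseteq\mathrm{int}(P)$, embeddability in $P$ yields embeddability in $Q$. For the converse, suppose $p_j$ is not embeddable in $P$, i.e.\ $T_j$ meets $\mathrm{int}(P)$; writing $D:=\triangle(p_i,p,q)$ we have $\mathrm{int}(Q)=\mathrm{int}(P)\setminus\overline{D}$, so it suffices to show $T_j\cap\mathrm{int}(P)\not\subseteq\overline{D}$. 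Now $T_j\cap\mathrm{int}(P)$ is a union of \emph{pockets}, each a connected region whose boundary consists of arcs of $\partial P$ and sub-segments of the chord $(\prede(p_j),\succe(p_j))$ of $T_j$ (the chord is the only side of $T_j$ that $\partial P$ can cross transversally, the other two sides being edges of $P$). If some pocket were contained in $\overline{D}$, each of its $\partial P$-arcs would lie in $\overline{D}\cap\partial P$, which equals exactly the edge $(p_i,q)$ of $P$ (because $D\subseteq\mathrm{int}(P)$, so the only side of $\overline{D}$ meeting $\partial P$ is $(p_i,q)$); such an arc is then a sub-segment of $(p_i,q)$ with both endpoints on the chord, which forces the chord and the edge $(p_i,q)$ to meet in two points — impossible under general position, since three points of $S$ are never collinear. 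Hence every pocket escapes $\overline{D}$, so $T_j$ meets $\mathrm{int}(Q)$ and $p_j$ is not embeddable in $Q$. The delicate part is that this general-position argument must also cover the boundary cases $j=i+2$ (where the chord already emanates from the vertex $q$ of $(p_i,q)$) and degenerate small instances.

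With the claim in hand, the decision procedure is: if $p_i=p_t$, then no vertex of $Q$ is larger than $p$, so $p=\largest(Q)$ and the pair is active. Otherwise let $m$ be the position of $\largest(P)$ on $P$, with no such $m$ when $P$ is convex. If $m\ge i+2$, then $p_m$ is embeddable in $P$, hence in $Q$ by the claim, and $p_m$ is larger than $p$ on $Q$, so $p\ne\largest(Q)$ and the pair is not active. If $m\le i+1$ (or $P$ is convex), then none of $p_{i+2},\dots,p_t$ is embeddable in $P$, hence none is embeddable in $Q$ by the claim, so the only candidate embeddable vertex of $Q$ larger than $p$ is $q$; we then test whether $q$ is embeddable in $Q$ — equivalently, whether the turn $p,q,\succe(q)$ is reflex and the segment $(p,\succe(q))$ crosses no edge of $Q$ — and declare the pair active iff this test succeeds. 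Everything costs $\order(\log n)$: the case analysis and the comparison of $m$ with $i$ are $\order(1)$ once $\largest(P)$ is given, and the embeddability test for $q$ in $Q$ needs only an orientation test and a constant number of ray-shooting queries in $P$ (one possibly re-issued, because $\partial Q$ is obtained from $\partial P$ by replacing the edge $(p_i,q)$ with the two edges $(p_i,p)$ and $(p,q)$, so the first hit may be the now-absent edge $(p_i,q)$, which must be skipped).
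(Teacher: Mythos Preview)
Your approach is essentially the paper's: the same digability check, the same three-way split according to the position of $p_i$ relative to $\prede(\largest(P))$, and the same single embeddability test for $q=\succe(p_i)$ in the borderline case. Where you differ is that you actually \emph{prove} the key invariant the paper merely asserts --- that for $j\ge i+2$ the vertex $p_j$ is embeddable in $Q=\dig(P,p_i,p)$ iff it is embeddable in $P$ --- via your pocket argument. The paper's Cases~1 and~3 rely on this invariant implicitly (``the largest embeddable point is still $\largest(P)$'', ``$p=\largest(\dig(P,p_i,p))$ always holds'') without justification, so your write-up is more complete on that point. You also fold the paper's Case~3 ($m\le i$) into the same branch as Case~2 ($m=i+1$) and re-run the $q$-embeddability test there; this is harmless, since under the paper's Case~3 the test will always report $q$ non-embeddable.

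There is, however, a genuine sign error in your final sentence. You write ``declare the pair active iff this test succeeds'', where ``this test'' is whether $q$ is embeddable in $Q$. That is backwards: if $q$ \emph{is} embeddable in $Q$ then $q\succ p$ is an embeddable vertex, so $p\neq\largest(Q)$ and the pair is \emph{not} active. The correct conclusion is: active iff $q$ is \emph{not} embeddable in $Q$. (This matches the paper's Case~2: ``if $\largest(P)$ is still embeddable \ldots\ not a child; if non-embeddable \ldots\ a child.'') Also, your acknowledged loose end for $j=i+2$ --- where the chord $(\prede(p_j),\succe(p_j))=(q,p_{i+3})$ shares the vertex $q$ with the edge $(p_i,q)$ --- deserves one more line: in that case the $\partial P$-arc and the chord can meet at $q$, but a second intersection would still force collinearity of $q$, $p_{i+3}$, and a point on $(p_i,q)$, contradicting general position.
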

\begin{proof}
	First, we check whether or not $(p_i,p)$ is digable.
	By ray shooting queries, we check that $\triangle(p_i,\succe(p_i),p)$ does not intersect with any edge of $P$.
	Then, by using a triangular query, we calculate the number of points in $\triangle(p_i,\succe(p_i),p)$ and check the number is less than or equal to $k-\msize{\outside(P)}$.
	These can be done in $\order(\log{n})$ time.
	
	If $(p_i,p)$ is digable, we do the following process to check whether the pair is active. 
	From now on, we assume that $(p_i,p)$ is digable.
	We have the following cases.
	
	\begin{mycase}{1}{$p_i \prec \prede(\largest(P))$.}
		If we apply a dig operation to $p_i$ smaller than $\prede(\largest(P))$, in the polygon $\dig(P,p_i,p)$, 
		the largest embeddable point is still $\largest(P)$, that is, $\largest(P) = \largest(\dig(P,p_i,p))$ holds.
		See \figurename~\ref{fig:active_cases}(b) for an example.
		Hence, from Lemma~\ref{lem:ischild}, $(p_i,p)$ is non-active and $\dig(P,p_i,p)$ is not a child of $P$.
	\end{mycase}
	
	\begin{mycase}{2}{$p_i = \prede(\largest(P))$.}
		If $\largest(P)$ is still embeddable in $\dig(P,p_i,p)$,
		then $(p_i,p)$ is non-active and $\dig(P,p_i,p)$ is not a child of $P$.
		See \figurename~\ref{fig:active_cases}(c).
		However, if $\largest(P)$ is non-embeddable in $\dig(P,p_i,p)$, then $(p_i,p)$ is active and $\dig(P,p_i,p)$ is a child.
		See \figurename~\ref{fig:active_cases}(d).
		Note that the embedability of $\largest(P)$ in $\dig(P,p_i,p)$ can be checked in $\order(\log{n})$ time using a ray shooting query.
	\end{mycase}
	
	\begin{mycaselast}{3}{$\prede(\largest(P)) \prec p_i$}
		In this case, $p=\largest(\dig(P,p_i,p))$ always holds.
		Thus, from Lemma~\ref{lem:ischild}, $(p_i,p)$ is active and $\dig(P,p_i,p)$ is a child of $P$.
		See \figurename~\ref{fig:active_cases}(e).
	\end{mycaselast}
	
	From the case analysis above, we have the following process to check whether or not $(p_i,p)$ is active.
	If $p_i \prec \prede(\largest(P))$, then $(p_i,p)$ is non-active.
	Let us assume that $p_i = \prede(\largest(P))$ holds.
	Then, we check $\largest(P)$ is still embeddable in $\dig(P,p_i,p)$ by using ray shooting query in $\order(\log{n})$ time.
	If $\largest(P)$ is still embeddable, then $(p_i,p)$ is non-active. Otherwise, it is active.
	If $\prede(\largest(P)) \prec p_i$, $(p_i,p)$ is always active.
	%\vspace{3mm}
	%%%
	%%% Figure for case-analysis
	%%%
	\begin{figure}[t]
		\centerline{\includegraphics[width=0.8\linewidth]{./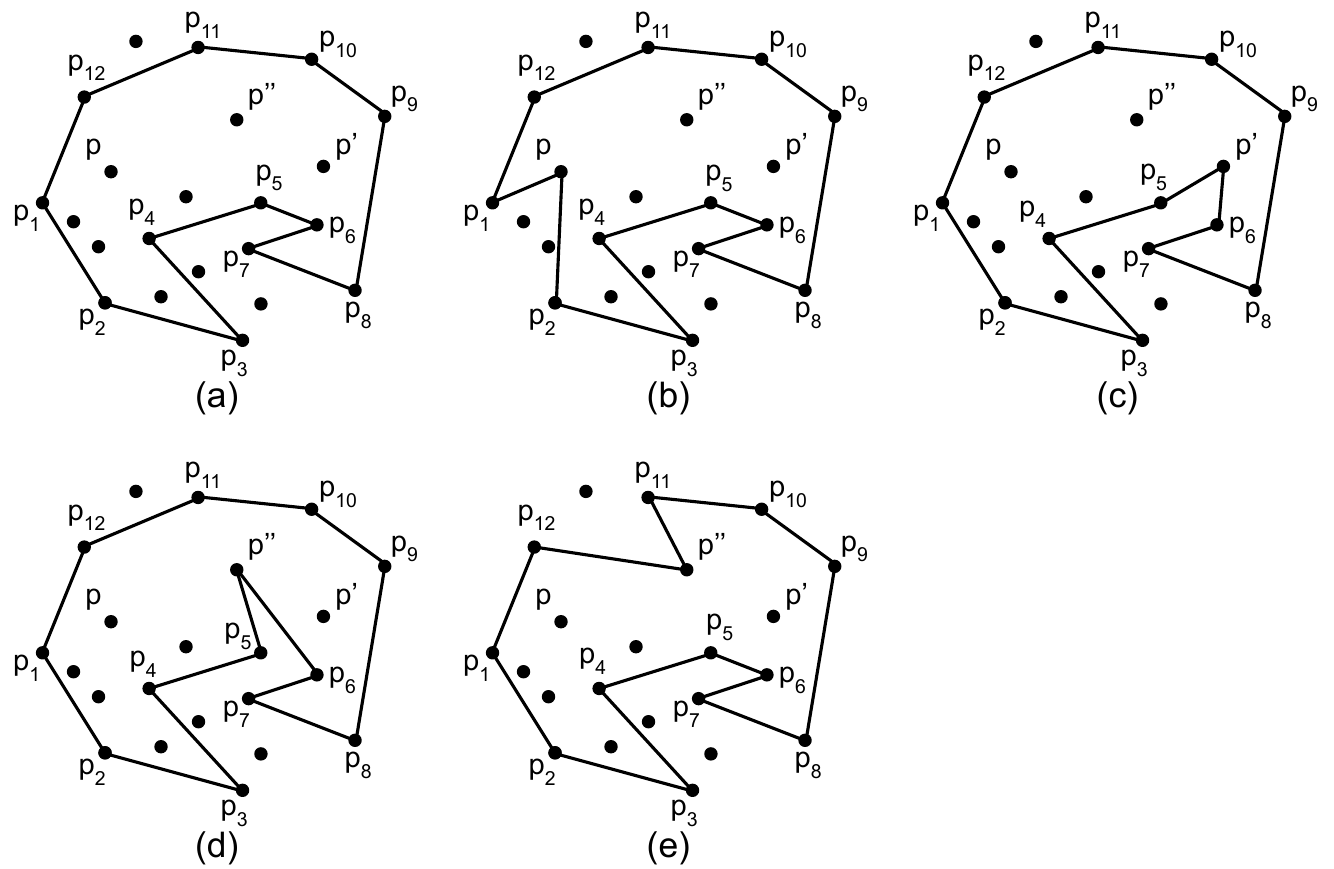}}
		\caption{(a) An at most 5-out polygon $P$, where $\largest(P) = p_6$.
			(b) $\dig(P,p_{1},p)$ is not a child of $P$, since $p$ is not the largest embeddable vertex.
			(c) $\dig(P,p_5,p')$ is not a child of $P$, since $p'$ is not the largest embeddable vertex.
			(d) $\dig(P,p_5,p'')$ is a child of $P$, since $p_6$ turns to be non-embeddable.
			(e) $\dig(P,p_{11},p'')$ is a child.}
		\label{fig:active_cases}
	\end{figure}
\end{proof}

From the lemma above, we have the following lemma.

\begin{corollary}\label{cor:active_set_Type1}
	Let $P$ be a polygon of $S \in \setkout(S)$, where $S$ is the set of $n$ points.
	Suppose that each of the triangular range and ray shooting queries can be computed in $\order(\log{n})$ time.
	One can construct the set of active pairs of $P$ in $\order(n^2 \log{n})$ time.
\end{corollary}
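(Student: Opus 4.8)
\textbf{Proof proposal for Corollary~\ref{cor:active_set_Type1}.}
The plan is to enumerate every candidate digable pair explicitly and test each one with Lemma~\ref{lem:active_check_Type1}. First I would compute, once, the data needed by that lemma: the set $\inside(P)$ (for instance by a ray-casting / point-location test for each of the $\order(n)$ points of $S$, which costs $\order(n\log n)$ total given the ray shooting structure), the count $\msize{\outside(P)}$ (immediate from $\inside(P)$ and $\vertices(P)$), and the vertex $\largest(P)$. To obtain $\largest(P)$ I would scan the vertices of $P$ in decreasing order and, for each $p_i$, test whether $p_i$ is embeddable, i.e.\ whether $\triangle(\prede(p_i),p_i,\succe(p_i))$ avoids the interior of $P$; this test reduces to a constant number of ray shooting queries, so it takes $\order(\log n)$ per vertex and $\order(n\log n)$ in total, and the first embeddable vertex found is $\largest(P)$ (or $\emptyset$ if $P$ is convex, in which case there are no Type-1 children and we are done).

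Next I would iterate over all $\order(n)$ vertices $p_i\in\vertices(P)$ and all $\order(n)$ points $p\in\inside(P)$, giving $\order(n^2)$ candidate pairs. For each pair $(p_i,p)$ I would run the procedure of Lemma~\ref{lem:active_check_Type1}: one ray shooting step (actually a constant number of them) to decide whether $\triangle(p_i,\succe(p_i),p)$ lies inside $P$, one triangular range query to check that it contains at most $k-\msize{\outside(P)}$ points of $S$, and then, using the already-computed $\largest(P)$, the three-case comparison of $p_i$ against $\prede(\largest(P))$ — with an extra ray shooting query in the boundary case $p_i=\prede(\largest(P))$ to test whether $\largest(P)$ remains embeddable in $\dig(P,p_i,p)$. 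By hypothesis each query costs $\order(\log n)$, so each pair is processed in $\order(\log n)$ time and the whole scan costs $\order(n^2\log n)$. Collecting the pairs certified active yields the desired set within the claimed bound.

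The only real point to be careful about is that Lemma~\ref{lem:active_check_Type1} needs to be fed the \emph{correct} $\largest(P)$ but never needs $\largest(\dig(P,p_i,p))$ of the child polygon, so the $\order(n^2)$ tests share the single precomputed value and no per-pair recomputation of a largest embeddable vertex is required; this is exactly what makes the running time $\order(n^2\log n)$ rather than $\order(n^3\log n)$. A secondary detail is confirming that the one-time preprocessing assumed available — $\order(n^2)$ for triangular range queries (Lemma~\ref{lem:triangle-query}) and $\order(n\log n)$ for ray shooting — together with the $\order(n^2)$ candidate pairs is consistent with the stated $\order(n^2)$ space bound; both structures fit, so the space is not the bottleneck. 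I do not anticipate a genuine obstacle here: the corollary is essentially the aggregation of Lemma~\ref{lem:active_check_Type1} over the $\order(n^2)$ pairs, and the main care is in the bookkeeping of what must be computed once versus once per pair.
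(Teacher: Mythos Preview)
Your proposal is correct and follows exactly the paper's approach: iterate over the $\order(n^2)$ candidate pairs $(p_i,p)$ with $p_i\in\vertices(P)$ and $p\in\inside(P)$, and apply Lemma~\ref{lem:active_check_Type1} to each in $\order(\log n)$ time. The paper's proof is just this two-line counting argument; your additional bookkeeping (precomputing $\inside(P)$, $\msize{\outside(P)}$, and $\largest(P)$) is consistent with and slightly more detailed than what the paper states.
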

\begin{proof}
	From Lemma~\ref{lem:active_check_Type1},
	for a pair $(p_i,p)$, where $p_i \in V(P)$ and $p \in \inside(P)$, one can check whether or not it is active.
	Since there exist $\order(n^2)$ such pairs,
	we take $\order(n^2 \log{n})$ time.
\end{proof}

Finally, we estimate the running time to construct a Type-1 child from $P$ when an active pair is given.

\begin{lemma}\label{lem:child_const_Type1}
Let $P$ be a polygon in $\setkout(S)$, where $S$ is the set of $n$ points in Euclidean plane.
Then, given an active pair $(p_i,p)$, where $p_i \in V(P)$ and $p \in \outside(P)$, one can construct $\dig(P,p_i,p)$ in $\order(1)$ time.
\end{lemma}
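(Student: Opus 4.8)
The plan is to show that a dig operation is a constant-time surgery on a suitable data structure representing the polygon. First I would fix the representation: maintain $P = \mseq{p_1,\ldots,p_t}$ as a doubly linked list of vertices in counterclockwise order, where each list node stores a point of $S$ together with pointers to its predecessor and successor on $P$; additionally keep, for each point of $S$, a flag indicating whether it is currently a vertex, an inside point, or an outside point, and (for vertices) a pointer to its list node. Constructing $\dig(P,p_i,p)$ then amounts to: locate the list node of $p_i$ (via its stored pointer), locate the node of $\succe(p_i)$, create a new node for $p$, splice it between them (set $\succe(p_i)\leftarrow p$, $\prede(p)\leftarrow p_i$, $\succe(p)\leftarrow$ old successor, and fix that successor's predecessor pointer), and update $p$'s status from ``inside'' to ``vertex.'' Each of these is a constant number of pointer assignments, so the whole update is $\order(1)$.

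Next I would address the one subtlety in the definition of the polygon representation used throughout the paper, namely that the vertex sequence is required to start at the lower-left vertex $p_1$ of $\vertices(P)$. A dig operation inserts an \emph{inside} point $p$ as a new vertex; since $p$ lies strictly inside $P$, it cannot be $<_{\mathrm{xy}}$-smaller than every vertex of $P$ (the lower-left vertex of $\vertices(P)$ is on the convex hull of $\vertices(P)$, hence $<_{\mathrm{xy}}$-minimal among all of $\vertices(P)\cup\inside(P)$ as well), so the lower-left vertex is unchanged and no renumbering is needed. I would state this as a one-line observation so that the resulting linked list still conforms to the paper's canonical form without any additional work. (If one insists on maintaining explicit indices $p_1,\ldots,p_t$ rather than a linked list, incrementing indices would cost $\order(n)$; the point of using the linked-list representation, consistent with how the other operations in the paper are implemented, is precisely to avoid this.)

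Finally I would note that the active pair $(p_i,p)$ is assumed given, so no time is spent locating it or verifying digability or activity — those costs are accounted for in Corollary~\ref{cor:active_set_Type1} and Lemma~\ref{lem:active_check_Type1}. The dig operation removes exactly one edge $(p_i,\succe(p_i))$ and adds exactly two edges $(p_i,p)$ and $(p,\succe(p_i))$, matching the pointer surgery above; that $\dig(P,p_i,p)$ is again a simple polygon in $\setkout(S)$ is guaranteed by the definition of digability (the triangle $\triangle(p,p_i,\succe(p_i))$ lies inside $P$ and contains at most $k-\msize{\outside(P)}$ points, so the outside count stays $\le k$), and was already observed right after the definition of the dig operation. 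I do not anticipate a genuine obstacle here; the only thing to be careful about is making explicit which representation is in force and confirming that the lower-left-vertex normalization is preserved, which is exactly what the two observations above handle.
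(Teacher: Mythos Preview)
Your proposal is correct and follows the same underlying idea as the paper: treat the dig operation as constant-time pointer surgery on a linked-list representation of the boundary. The paper's own proof is a single sentence asserting that the construction can be done in $\order(1)$ time, so your explicit description of the data structure and your observation that the lower-left vertex is unchanged (since $p\in\inside(P)$) go well beyond what the paper actually supplies, but they are entirely in line with it.
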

\begin{proof}
	Constructing $\dig(P,p_i,p)$ from $P$ can be done in $\order(1)$ time.
\end{proof}

\subsection{Type-2 children}

Next, we consider how to enumerate Type-2 children.
Remember that only a convex polygon has Type-2 children.
Suppose that $P=\mseq{p_1,p_2,\ldots ,p_t}$ is a convex polygon in $\setkout(S)$.

%
% Def. of convex chain
%
Let $p_i$ be a vertex of $P$.
Now, we define a remove operation, as follows.
A \emph{remove operation} to a vertex $p_i$ of $P$ 
removes the two edges $(\prede(p_i), p_i)$  and $(p_i, \succe(p_i))$, and inserts the chain of the edges between $\prede(p_i)$ and $\succe(p_i)$ such that
the resulting polygon is the convex hull of $\inside(P) \cup V(P) \setminus \mset{p_i}$.
$\rmv(P,p_i)$ denotes the resulting polygon.
Notice that the remove operation is the inverse of the insert operation.
$\rmv(P,p_i)$ can be constructed in $\order(n\log n)$-time by a standard algorithm for finding convex hulls.
Note that $\rmv(P,p_i)$ is also a convex polygon in $\setkout(S)$ if $\msize{\outside(\rmv(P,p_i))}\leq k$. It can be observed that
$\rmv(P,p_i)$ is a child of $P$ if $P=\parent(\rmv(P,p_i))$. A vertex $p_i$ is called \emph{active} if $\rmv(P,p_i)$ is a child of $P$.

If we have the set of the active vertices of $P$, 
it is easy to enumerate all the Type-2 children of $P$.
In this subsection, we show that one can construct such set in $\order(n^2\log{n})$ time.
The following lemma is useful for checking the activity of a vertex.

\begin{lemma}\label{lem:rmv-act}
Let $P = \mseq{p_1,p_2,\ldots,p_t}$, $(t\leq n),$ be a convex polygon in $\setkout(S)$ with $\msize{\outside(P)}<k$, where $S$ is a set of $n$ points in Euclidean plane. 
Then, a vertex $p_i$ is active if and only if $p_i = \lowerleft(\inspset(\rmv(P,p_i)))$.
\end{lemma}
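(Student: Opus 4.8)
The plan is to unwind both the definition of "active" (namely $P = \parent(\rmv(P,p_i))$) and the definition of $\parent(\cdot)$ from Case~2 in Section~\ref{sec:ft}, and show the two sides match. First I would observe that $\rmv(P,p_i)$ is always a convex polygon of $S$ by construction, so its parent is computed via Case~2: $\parent(\rmv(P,p_i)) = \insertion(\rmv(P,p_i), \lowerleft(\inspset(\rmv(P,p_i))))$. Since the remove operation and the insert operation are mutual inverses (as noted in the text after the definition of $\rmv$), applying an insertion of the point $p_i$ to $\rmv(P,p_i)$ recovers exactly $P$; this is because $p_i$ was a vertex of the convex polygon $P$, and removing it and then re-inserting it at its two tangency vertices — which are precisely $\prede(p_i)$ and $\succe(p_i)$ in $P$ — rebuilds $P$. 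I would make this reversibility explicit: $\insertion(\rmv(P,p_i), p_i) = P$, and conversely $\rmv(P,p_i)$ is obtained from $P$ by the insertion-inverse at $p_i$.

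The key step is then the following equivalence chain. We have $P = \parent(\rmv(P,p_i))$ if and only if $\insertion(\rmv(P,p_i), \lowerleft(\inspset(\rmv(P,p_i)))) = P = \insertion(\rmv(P,p_i), p_i)$. I would argue that for a fixed convex polygon $Q$ (here $Q = \rmv(P,p_i)$), distinct insertable points yield distinct polygons — an insertion of $p$ adds $p$ as a new vertex, so the vertex set of $\insertion(Q,p)$ determines $p$ uniquely as the one vertex not in $V(Q)$. Hence $\insertion(Q, \lowerleft(\inspset(Q))) = \insertion(Q, p_i)$ holds if and only if $\lowerleft(\inspset(Q)) = p_i$, i.e. $p_i = \lowerleft(\inspset(\rmv(P,p_i)))$. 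Combining, $p_i$ is active $\iff$ $p_i = \lowerleft(\inspset(\rmv(P,p_i)))$, which is the claim.

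Two details need care. First, one must check $p_i \in \inspset(\rmv(P,p_i))$ at all, i.e. that $p_i$ is genuinely an \emph{insertable} point of $\rmv(P,p_i)$, so that the equation $p_i = \lowerleft(\inspset(\rmv(P,p_i)))$ can even hold; this follows because $\region(\rmv(P,p_i), p_i)$ is contained in the region swept out when $p_i$ was removed, and since $P$ was convex with $p_i$ a vertex, that region contained no outside point other than possibly newly-outside ones — here the hypothesis $\msize{\outside(P)} < k$ guarantees $\rmv(P,p_i) \in \setkout(S)$ and that no spurious outside point blocks insertability of $p_i$. Second, one must confirm $\rmv(P,p_i) \ne \convexhull(S)$ so that Case~2 (rather than the root having no parent) applies; but if $\rmv(P,p_i) = \convexhull(S)$ then it has no parent and $p_i$ is vacuously non-active, while simultaneously $\inspset(\convexhull(S)) = \emptyset$ so the right-hand side also fails, keeping the equivalence intact. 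The main obstacle I anticipate is the bookkeeping in the first detail: carefully verifying that the tangency vertices of $p_i$ with respect to $\rmv(P,p_i)$ are exactly $\prede(p_i)$ and $\succe(p_i)$ (the neighbors in $P$), and that no point of $\outside(\rmv(P,p_i))$ lies inside $\region(\rmv(P,p_i),p_i)$, which is where the geometry of convex hulls and the $\msize{\outside(P)}<k$ assumption must be invoked precisely.
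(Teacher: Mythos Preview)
Your argument is correct, and the paper itself gives no proof of this lemma --- it is stated as an immediate consequence of the definitions, parallel to Lemma~\ref{lem:ischild}. Your unwinding (active $\Leftrightarrow$ $P=\parent(\rmv(P,p_i))$ $\Leftrightarrow$ $\insertion(\rmv(P,p_i),\lowerleft(\inspset(\rmv(P,p_i))))=\insertion(\rmv(P,p_i),p_i)$ $\Leftrightarrow$ $p_i=\lowerleft(\inspset(\rmv(P,p_i)))$, using injectivity of $\insertion(Q,\cdot)$ on vertex sets) is exactly the intended reasoning.

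Two small corrections to your ``details'' paragraph. First, the hypothesis $\msize{\outside(P)}<k$ is not what guarantees that $p_i$ is insertable to $\rmv(P,p_i)$; insertability follows purely from the geometry. Since $P$ is convex, the edges $(\prede(p_i),p_i)$ and $(p_i,\succe(p_i))$ lie on supporting lines of $P$, hence also of $\rmv(P,p_i)\subseteq P$, so $\prede(p_i),\succe(p_i)$ are the tangency vertices and $\region(\rmv(P,p_i),p_i)$ is exactly the closure of $P\setminus\rmv(P,p_i)$. Every point of $S$ in that region lies in $\inside(P)\cup V(P)\setminus\{p_i\}$ and is therefore in $\rmv(P,p_i)$ by construction, so no outside point obstructs insertability. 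The role of $\msize{\outside(P)}<k$ is only to ensure $\rmv(P,p_i)\in\setkout(S)$, since $p_i$ becomes a new outside point. Second, the case $\rmv(P,p_i)=\convexhull(S)$ never occurs: $p_i$ is always outside $\rmv(P,p_i)$, so $\outside(\rmv(P,p_i))\ne\emptyset$. You may drop that case entirely.
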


From Lemma~\ref{lem:rmv-act}, 
if $p_i = \lowerleft(\inspset(\rmv(P,p_i)))$, it implies that $p_i$ is active.
To check whether $p_i = \lowerleft(\inspset(\rmv(P,p_i)))$ efficiently, 
we maintain the the number of the points inside $\region(P,p)$ for each $p \in \outside(P)$.

%
% Maintenance
%
For a polygon $P$ and point $p \in \outside(P)$, 
we denote the number of the points inside $\region(P,p)$ by $\regnum(P,p)$.
For three points $p,q,r \in S$, we denote the number of the points inside the triangle $\triangle(p,q,r)$ by $\triangle_{\#}(p,q,r)$.
Let $\rmv(P,p_i)$ be a Type-2 child of $P$.
Note that, for a point $p \in \outside(P)$, $p$ satisfies $\regnum(P,p) = 0$ if and only if $p \in \inspset(P)$.
To maintain $\regnum(P,p)$ for each $p \in \outside(P)$,
we investigate the difference between $\regnum(P,p)$ and $\regnum(\rmv(P,p_i),p)$.

First, we consider the case where $p = p_i$ holds.
Then, it is easy to observe that $\regnum(\rmv(P,p_i),p) = 0$ holds from the definition of the remove operation.
Next, we consider the case where $p_i=\tang_1(P,p)$ holds.
Remember that $\tang_1(P,p)$ and $\tang_2(P,p)$ are the tangency vertices of $p$ in $P$.
Then, $p$ has a new tangency vertex in $\rmv(P,p_i)$. (Note that $\tang_2(P,p)$ is still a tangency vertex of $p$ in $\rmv(P,p_i)$.)
We denote the new tangency vertex by $\tang_1'$.
Then, we can observe that 
$\regnum(\rmv(P,p_i),p)=\regnum(P,p) - \triangle_{\#}(p,\tang_1(P,p),\tang_1')$ holds.
See \figurename~\ref{fig:update_t1}.
\begin{figure}[tb]
\centerline{\includegraphics[width=0.8\linewidth]{./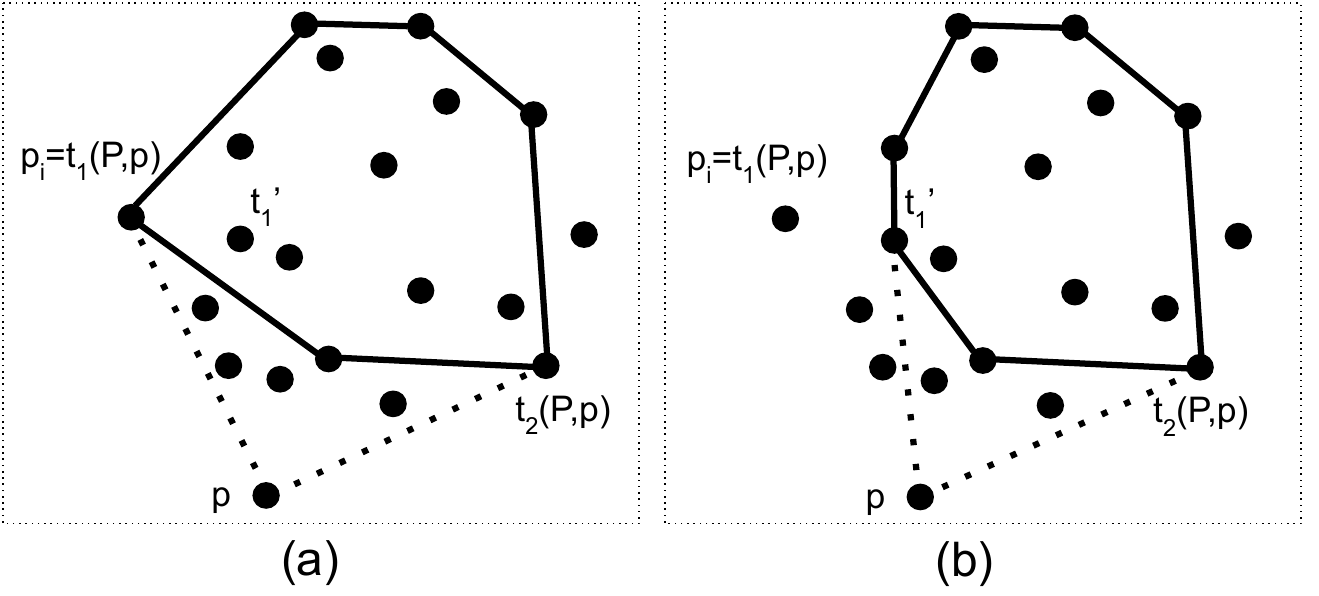}}
\caption{Illustration for the case $p_i=\tang_1(P,p)$. 
(a) $\regnum(P,p)=4$ in the polygon $P$.
(b) $\regnum(\rmv(P,p_i)) = \regnum(P,p) - \triangle_{\#}(p,\tang_1(P,p),\tang_1') =2$ in $\rmv(P,p_i)$.}
\label{fig:update_t1}
\end{figure}
Similarly, if $p_i=\tang_2(P,p)$, $p$ has a new tangency vertex in $\rmv(P,p_i)$.
We denote the new tangency vertex by $\tang_2'$.
It can be observed that $\regnum(\rmv(P,p_i),p)=\regnum(P,p) - \triangle_{\#}(p,\tang_2(P,p),\tang_2')$ holds.
Finally, we consider the case where $p \neq p_i$, $p_i\neq \tang_1(P,p)$, and $p_i\neq \tang_2(P,p)$.
Note that $\tang_1(P,p)$ and $\tang_2(P,p)$ are still tangency vertices of $p$ in $\rmv(P,p_i)$.
Then, if $p_i \in \triangle(p,\tang_1(P,p),\tang_2(P,p))$,
$\regnum(\rmv(P,p_i),p)=\regnum(P,p) +1$.
Otherwise, $\regnum(\rmv(P,p_i),p)=\regnum(P,p)$.
Therefore, we have the following equation, which is a relationship between $\regnum(P,p)$ and $\regnum(\rmv(P,p_i),p)$.

\begin{eqnarray}\label{eq:points}
& & \regnum(\rmv(P,p_i),p) \nonumber \\
& & = \begin{cases}
		0 & \text{$p=p_i$}, \\
		\regnum(P,p) - \triangle_{\#}(p,\tang_1(P,p),\tang_1') & \text{$p \neq p_i$ and $p_i = \tang_1(P,p)$}, \\
		\regnum(P,p) -\triangle_{\#}(p,\tang_2(P,p),\tang_2')) & \text{$p\neq p_i$ and $p_i = \tang_2(P,p)$},\\
		\regnum(P,p) +1 & \text{$p\neq p_i$, $p_i \neq \tang_1(P,p)$, $p_i \neq \tang_2(P,p)$,} \\
		& \text{and $p_i \in \triangle(p,\tang_1(P,p),\tang_2(P,p))$, and}\\
		\regnum(P,p) & \text{$p\neq p_i$, $p_i \neq \tang_1(P,p)$, $p_i \neq \tang_2(P,p)$,} \\
		& \text{and $p_i \notin \triangle(p,\tang_1(P,p),\tang_2(P,p))$}.
	\end{cases} \nonumber \\
\end{eqnarray}
\begin{lemma}\label{lem:update}
	Let $P=\mseq{p_1,p_2,\ldots p_t}$ be a convex polygon in $\setkout(S)$, where $S$ is the set of $n$ points.
	Suppose that $\regnum(P,p)$ for each $p \in \outside(P)$ is given and a triangular range query can be computed in $\order(\log{n})$ time.
	Then, one can check whether $p_i \in V(P)$ is active in $\order(n\log{n})$ time.
\end{lemma}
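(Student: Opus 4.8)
The plan is to reduce the activity test to Lemma~\ref{lem:rmv-act}, which says $p_i$ is active if and only if $p_i = \lowerleft(\inspset(\rmv(P,p_i)))$, and then to compute the right-hand side within the stated bound by reusing the given values $\regnum(P,p)$ through Equation~\ref{eq:points}. First I would dispose of a degenerate case: if $\msize{\outside(P)} = k$, then since $\rmv(P,p_i) \subseteq P$ and $p_i$ is an extreme point of $\inside(P) \cup V(P)$ we have $\outside(\rmv(P,p_i)) = \outside(P) \cup \mset{p_i}$, so $\rmv(P,p_i)$ has $k+1$ outside points, hence is not in $\setkout(S)$ and $p_i$ is non-active; this check is $\order(1)$. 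So assume $\msize{\outside(P)} < k$, the hypothesis of Lemma~\ref{lem:rmv-act}.

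Next I would construct $\rmv(P,p_i) = \convexhull((\inside(P)\cup V(P))\setminus \mset{p_i})$ in $\order(n\log n)$ time by a standard convex-hull algorithm, and preprocess both $P$ and $\rmv(P,p_i)$ for tangent queries via Lemma~\ref{lem:tangents}, again $\order(n\log n)$ time and done once. Besides the identity $\outside(\rmv(P,p_i)) = \outside(P)\cup\mset{p_i}$ noted above, record that $p_i$ is always insertable to $\rmv(P,p_i)$: by the first line of Equation~\ref{eq:points}, $\regnum(\rmv(P,p_i),p_i)=0$, so $p_i \in \inspset(\rmv(P,p_i))$. Hence $p_i = \lowerleft(\inspset(\rmv(P,p_i)))$ holds if and only if no $q \in \outside(P)$ with $q <_{\mathrm{xy}} p_i$ satisfies $\regnum(\rmv(P,p_i),q)=0$.

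The core is then to evaluate $\regnum(\rmv(P,p_i),q)$ for each $q \in \outside(P)$ using Equation~\ref{eq:points}, in $\order(\log n)$ time per point. For a fixed $q$, obtain $\tang_1(P,q),\tang_2(P,q)$ by a tangent query on $P$ in $\order(\log n)$ time (or directly, if stored alongside $\regnum(P,q)$). In $\order(1)$ time one decides which of the five branches of Equation~\ref{eq:points} applies by comparing $p_i$ with the two tangency vertices and, in the relevant branch, testing $p_i \in \triangle(q,\tang_1(P,q),\tang_2(P,q))$. When $p_i$ equals one of the tangency vertices, the replacement vertex $\tang_1'$ (or $\tang_2'$) is found by a tangent query on $\rmv(P,p_i)$ in $\order(\log n)$ time, and the correction term $\triangle_{\#}(\cdot,\cdot,\cdot)$ is a triangular range query, $\order(\log n)$ time; the remaining branches cost $\order(1)$ given $\regnum(P,q)$. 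Summing over the $\order(n)$ points of $\outside(P)$ yields $\order(n\log n)$ time; finally we take the $<_{\mathrm{xy}}$-smallest $q$ with $\regnum(\rmv(P,p_i),q)=0$, compare it with $p_i$, and by Lemma~\ref{lem:rmv-act} declare $p_i$ active exactly when $p_i$ is this lower-left point. The total running time is $\order(n\log n)$.

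The step I expect to be the main obstacle is confirming that \emph{every} branch of Equation~\ref{eq:points} is evaluable in $\order(\log n)$ time, and in particular that the post-removal tangency vertices $\tang_1',\tang_2'$ can each be located in $\order(\log n)$ time; this is what forces the $\order(n\log n)$ preprocessing of $\rmv(P,p_i)$ for tangent queries, which together with the construction of $\rmv(P,p_i)$ is the real bottleneck — but it is paid only once per vertex, so the bound is unaffected. A secondary point that needs care is the justification of $\outside(\rmv(P,p_i)) = \outside(P)\cup\mset{p_i}$ and of $p_i \in \inspset(\rmv(P,p_i))$, since both are used to argue that only the points of $\outside(P)$ lying below $p_i$ in the $<_{\mathrm{xy}}$ order need to be examined.
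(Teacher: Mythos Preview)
Your proposal is correct and follows essentially the same route as the paper: reduce to Lemma~\ref{lem:rmv-act}, build $\rmv(P,p_i)$ in $\order(n\log n)$ time, preprocess for tangent queries, update all values $\regnum(\rmv(P,p_i),\cdot)$ via Equation~\ref{eq:points} at $\order(\log n)$ cost per outside point, and read off $\lowerleft(\inspset(\rmv(P,p_i)))$. Your explicit handling of the case $\msize{\outside(P)}=k$ and your justification that $\outside(\rmv(P,p_i))=\outside(P)\cup\mset{p_i}$ with $p_i\in\inspset(\rmv(P,p_i))$ are details the paper leaves implicit, but they do not change the argument.
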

\begin{proof}
It is sufficient to check whether or not $p_i = \lowerleft(I(\rmv(P,p_i)))$ from Lemma~\ref{lem:rmv-act}.
If we have the value $\regnum(\rmv(P,p_i),p)$ for each $p \in \outside(P)$,
we can list all the points in $I(\rmv(P,p_i))$. 
As a result, we can check whether or not $p_i = \lowerleft(I(\rmv(P,p_i)))$.
Hence, we consider how to compute $\regnum(\rmv(P,p_i),p)$ for each $p \in \outside(P)$ in this proof.
	
We first compute $\rmv(P,p_i)$ in $\order(n \log{n})$ time using a standard algorithm for finding convex hulls.
Next, we compute the tangency vertices $\tang_1(P,p)$ and $\tang_2(P,p)$ for each $p \in \outside(P)$.
Moreover, we compute $\tang_1(\rmv(P,p_i),p)$ and $\tang_2(\rmv(P,p_i),p)$ for each $p \in \outside(\rmv(P,p_i))$.
These can be done in $\order(n\log{n})$ time in total from Lemma~\ref{lem:tangents}.
By using Equation~(\ref{eq:points}), we compute $\regnum(\rmv(P,p_i),p)$ from $\regnum(P,p)$ for each $p \in \outside(\rmv(P,p_i))$, as follows.
\begin{listing}{aaa}
\item[Case~1:] $p = p_i$.
		
Then, we only set $\regnum(\rmv(P,p_i),p) = 0$, which takes constant time.
\item[Case~2:] $p\neq p_i$ and $p_i = \tang_1(P,p)$.
		
In this case, 
$p$ has a new tangency vertex $\tang_1'$ in $\rmv(P,p_i)$.
Note that $\tang_2(P,p)$ is still a tangency vertex of $p$ in $\rmv(P,p_i)$.
Then, we compute $\triangle_{\#}(p,\tang_1(P,p),\tang_1')$, which can be done in $\order(\log{n})$ time by a triangular range query,
and we set $\regnum(\rmv(P,p_i),p)=\regnum(P,p) - \triangle_{\#}(p,\tang_1(P,p),\tang_1')$.
\item[Case~3:] $p\neq p_i$ and $p_i = \tang_2(P,p)$.
		
The same discussion of Case~2 can be applied.
\item[Cases~4 and 5:] $p\neq p_i$, $p_i \neq \tang_1(P,p)$, and $p_i \neq \tang_2(P,p)$.
		
We check whether or not $p_i \in \triangle(p,\tang_1(P,p),\tang_2(P,p))$. 
If it is true, we set $\regnum(\rmv(P,p_i),p) = \regnum(P,p)+1$. 
Otherwise, we set $\regnum(\rmv(P,p_i),p) = \regnum(P,p)$.
This can be done in constant time.
\end{listing}
From the above case analysis, 
we can compute $\regnum(\rmv(P,p_i),p)$ for every $p\in \outside(P)$ in $\order(n\log{n})$ time in total.
Therefore, one can list the vertices in $\inspset(\rmv(P,p_i))$ and check whether $p_i = \lowerleft(\inspset(\rmv(P,p_i)))$ in $\order(n\log{n})$ time.
\end{proof}

The following lemma is immediate from the above lemma.
\begin{corollary}\label{cor:active_set_Type2}
	Let $P$ be a convex polygon in $\setkout(S)$, where $S$ is the set of $n$ points.
	Suppose that $\regnum(P,p)$ for each $p \in \outside(P)$ is given and a triangular range query can be computed in $\order(\log{n})$ time.
	Then, one can construct the set of active vertices in $P$ in $\order(n^2 \log{n})$ time.
\end{corollary}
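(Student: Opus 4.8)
The plan is to observe that this corollary is an almost immediate consequence of Lemma~\ref{lem:update}: it suffices to run the single-vertex activity test of that lemma once for every vertex of $P$, starting each test from the \emph{same} supplied data $\regnum(P,p)$, $p\in\outside(P)$, and collecting the vertices that the test reports as active.

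First I would dispose of a boundary case. Because $\rmv(P,p_i)$ is contained in $P$ we have $\outside(P)\subseteq\outside(\rmv(P,p_i))$, and moreover the removed vertex $p_i$ is an extreme point of $\inside(P)\cup\vertices(P)$, so by general position it lies strictly outside $\rmv(P,p_i)$; hence $\msize{\outside(\rmv(P,p_i))}=\msize{\outside(P)}+1$ for every vertex $p_i$. Consequently, if $\msize{\outside(P)}=k$ then no $\rmv(P,p_i)$ belongs to $\setkout(S)$, so $P$ has no active vertex and the algorithm returns the empty set after merely reading $\msize{\outside(P)}$. In the remaining case $\msize{\outside(P)}<k$ the hypothesis of Lemma~\ref{lem:rmv-act} is met, so a vertex $p_i$ is active exactly when $p_i=\lowerleft(\inspset(\rmv(P,p_i)))$, which is precisely the condition decided by the procedure of Lemma~\ref{lem:update}.

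Next I would iterate over the $t\leq n$ vertices $p_1,\ldots,p_t$ of $P$. For each $p_i$ I invoke Lemma~\ref{lem:update} on $p_i$; the values $\regnum(P,p)$ needed by that lemma are the ones given in the hypothesis, and they are only read, never overwritten, across the iterations, since every call reasons about a hypothetical child $\rmv(P,p_i)$ of the \emph{fixed} polygon $P$ (the internally computed $\regnum(\rmv(P,p_i),\cdot)$ values, obtained via Equation~(\ref{eq:points}), are local to that call). A vertex is appended to the output set iff the test reports it active. There are $\order(n)$ iterations, each costing $\order(n\log n)$ time by Lemma~\ref{lem:update} (its triangular range queries run in $\order(\log n)$ time by assumption, as guaranteed after the preprocessing of Lemma~\ref{lem:triangle-query}), and maintaining the output set adds only $\order(n)$ total overhead; hence the whole computation takes $\order(n^2\log n)$ time.

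I do not expect a genuine obstacle here: all the geometric content — recomputing $\rmv(P,p_i)$ and the relevant tangency vertices, and the five-way update rule of Equation~(\ref{eq:points}) — has already been absorbed into Lemma~\ref{lem:update}. The only subtleties worth spelling out are the boundary case $\msize{\outside(P)}=k$ and the remark that the supplied $\regnum(P,\cdot)$ array is legitimately reused in all $n$ tests because the algorithm never actually commits to any child.
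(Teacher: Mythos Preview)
Your proposal is correct and follows the same approach as the paper, which states only that the corollary is ``immediate from the above lemma'' and gives no further argument. You simply make explicit what the paper leaves implicit: loop over the at most $n$ vertices of $P$, invoke Lemma~\ref{lem:update} on each, and collect the active ones, for a total of $\order(n\cdot n\log n)=\order(n^2\log n)$ time. Your added treatment of the boundary case $\msize{\outside(P)}=k$ is a nice touch (the paper sidesteps this by guarding the Type-2 enumeration in Algorithm~\ref{alg:childenum} with the test $\msize{\outside(P)}<k$ rather than handling it inside the corollary).
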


Now, the following lemma estimates the running time for constructing a child.

\begin{lemma}\label{lem:child_const_Type2}
	Let $P$ be a polygon in $\setkout(S)$.
	Suppose that $\regnum(P,p)$ for each $p \in \outside(P)$ is given and a triangular range query can be computed in $\order(\log{n})$ time.
	Then, 
	given an active vertex $p_i \in V(P)$, one can construct $\rmv(P,p_i)$ and $\regnum(\rmv(P,p_i),p)$ for each $p \in \outside(\rmv(P,p_i))$ in $\order(n \log{n})$ time.
\end{lemma}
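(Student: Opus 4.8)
The plan is to recycle the constructive content of the proof of Lemma~\ref{lem:update}: that argument already shows how to obtain $\rmv(P,p_i)$ together with the numbers $\regnum(\rmv(P,p_i),p)$, and the extra work it performed was only the activity test for $p_i$, which we are now handed for free. So I would run essentially the same computation and recheck its cost; the hypothesis that $p_i$ is active is not even needed for the construction itself.

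First I would build $\rmv(P,p_i)$ as the convex hull of $\inside(P)\cup V(P)\setminus\mset{p_i}$ with a standard $\order(n\log n)$-time convex-hull algorithm. Since $P$ is convex and $p_i$ is one of its vertices, the new region is contained in $P$; consequently $\outside(\rmv(P,p_i)) = \outside(P)\cup\mset{p_i}$, because every point of $\inside(P)$ lies in the convex region $\rmv(P,p_i)$ (which contains all of $\inside(P)\cup V(P)\setminus\mset{p_i}$), while $p_i$, being an extreme point of $P$, falls strictly outside $\rmv(P,p_i)$ in general position. Hence it suffices to produce $\regnum(\rmv(P,p_i),p)$ for $p\in\outside(P)$ and to set $\regnum(\rmv(P,p_i),p_i)=0$. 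Next I would run the preprocessing of Lemma~\ref{lem:tangents} on both $P$ and $\rmv(P,p_i)$ in $\order(n\log n)$ time, and then, for each $p\in\outside(P)$, compute the two tangency vertices $\tang_1(P,p),\tang_2(P,p)$ of $p$ on $P$ and the two tangency vertices of $p$ on $\rmv(P,p_i)$, each query costing $\order(\log n)$, for $\order(n\log n)$ in total.

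With these quantities in hand, updating a single outside point $p$ reduces to evaluating Equation~(\ref{eq:points}): test whether $p=p_i$, whether $p_i$ equals $\tang_1(P,p)$ or $\tang_2(P,p)$, and whether $p_i\in\triangle(p,\tang_1(P,p),\tang_2(P,p))$ — all constant-time geometric tests — and, in the branches where $p_i$ is a tangency vertex, perform one triangular range query $\triangle_{\#}(p,\tang_j(P,p),\tang_j')$, where $\tang_j'$ is the new tangency vertex of $p$ on $\rmv(P,p_i)$, identified as the one of its two tangency vertices on $\rmv(P,p_i)$ that differs from the surviving vertex $\tang_{3-j}(P,p)$. That query costs $\order(\log n)$ by hypothesis, so each $p$ is processed in $\order(\log n)$ time and all $\order(n)$ points of $\outside(P)$ in $\order(n\log n)$ time. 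Summing the convex-hull construction, the tangent preprocessing and queries, and the $\order(n)$ triangular range queries yields the claimed $\order(n\log n)$ bound. I do not expect a real obstacle here; the only points needing care are the bookkeeping that matches each outside point's old and new tangency vertices so the correct $\tang_j'$ enters the range query, and the observation above that $\outside(\rmv(P,p_i))$ contributes no point beyond those handled by Equation~(\ref{eq:points}) — both of which follow directly from the discussion preceding that equation and from the fact that removing $p_i$ shrinks the polygon.
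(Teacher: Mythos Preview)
Your proposal is correct and follows essentially the same approach as the paper: the paper's proof simply invokes a standard convex-hull algorithm for $\rmv(P,p_i)$ and then points back to the computation in the proof of Lemma~\ref{lem:update} (via Equation~(\ref{eq:points})) to update all the $\regnum$ values in $\order(n\log n)$ time. Your write-up is more detailed---in particular your explicit observation that $\outside(\rmv(P,p_i))=\outside(P)\cup\mset{p_i}$ and the bookkeeping for identifying $\tang_j'$ are useful clarifications---but the underlying argument is identical.
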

\begin{proof}
	By using a standard algorithm for finding convex hulls, we construct $\rmv(P,p_i)$ in $\order(n \log{n})$ time.
	To update $\regnum(\rmv(P,p_i),p)$ for each $p \in \outside(\rmv(P,p_i))$, we use Equation~(\ref{eq:points}) in the same way explained in the proof of Lemma~\ref{lem:update}, which can be done in $\order(n\log{n})$ time in total.
\end{proof}

%
% Revision to here
%
\subsection{Enumeration algorithm}

Now, we are ready to describe our enumeration algorithm.
Algorithm~\ref{alg:main} is the main routine and Algorithm~\ref{alg:childenum} is a subroutine.

%
% Main routine
%
\begin{algorithm}[t]
	\tcc{Let $S$ be a set of $n$ points in Euclidean plane.}
	Construct the convex hull $\convexhull(S)$ of the input point set $S$\;
	Preprocess 
	%to 
	$S$ for triangular range queries\;
	$\mathtt{FindChildren}(\convexhull(S),\emptyset)$\;
	\caption{$\mathtt{Enum}(S,k)$}
	\label{alg:main}
\end{algorithm}
%
% Child enumeration
%
\begin{algorithm}[t]
	\tcc{$P=\mseq{p_1,p_2,\ldots,p_t}$ is a polygon in $\setkout(S)$ with $t$~($k\le t$), and $S$ is stored in a global variable. If $P$ has at least one embeddable vertex $p_j=\largest(P)$. Otherwise $p_j=\emptyset$.}
	Output $P$\;
	%Compute $|\outside(P)|$\;
	Preprocess 
	%to 
	$P$ for ray shooting queries\;
	\lIf{$p_j=\emptyset$}{$q=p_0$}
	\lElse{$q=\prede(p_j)$}
	\tcc{Note that $p_j=\largest(P)$ and $p_0$ is a sentinel vertex satisfying $p_0 \prec p_i$ for each $i=1,2,\ldots,t$.}
	\ForEach(\tcc*[h]{Enumeration of Type-1 children.}){point $p_i$ with $q \prec p_i$\label{line:type1_start}}{
		\ForEach{point $p \in \inside(P)$}{
			\lIf{the pair $(p_i,p)$ is active}{$\mathtt{FindChildren}(\dig(P,p_i,p),p))$}	
		}
	}\label{line:type1_end}%endfor
	\If{$P$ is convex and $\msize{\outside(P)}<k$}{\label{line:type2_start}%endfor
		\ForEach(\tcc*[h]{Enumeration of Type-2 children.}){vertex $p_i$ of $P$}{
			\If{$p_i$ is active}{
				Compute $\regnum(\rmv(P,p_i),p)$ for each $p \in \outside(\rmv(P,p_i))$ \label{line:regnum}\;
				$\mathtt{FindChildren}(\rmv(P,p_i),\emptyset))$\;
				%$\mathtt{FindConvexChildren}(P,p_i, k)$\;
				%\tcc{The procedure enumerates the convex children of $P$ with $\order{n^2\log n}$-delay. See Section~\ref{sec:conv-children}.}
			}
		}
	}\label{line:type2_end}%endfor
	\caption{$\mathtt{FindChildren}(P=\mseq{p_1,p_2,\ldots ,p_t},p_j)$ \label{IR}}
	\label{alg:childenum}
\end{algorithm}
%
% to here
%

Algorithm~\ref{alg:main} first constructs the convex hull $\convexhull(S)$ of the input point set $S$.
Then, it executes a preprocess to $S$ for efficiently answering triangular range queries. 
Note that the preprocess for triangular range queries is executed only once in our algorithm.
Algorithm~\ref{alg:childenum} first outputs the current polygon $P \in \setkout(S)$ and executes a preprocess to $P$ for efficiently answering ray shooting queries. 
%It also computes the value $|\outside(P)|$, which will be repeatedly used in creating children of $P$.
The preprocess is done once for a recursive call.
Next, the algorithm enumerates all the Type-1 children by dig operations in Lines~\ref{line:type1_start}--\ref{line:type1_end} and all the Type-2 children of $P$ by remove operations in Lines~\ref{line:type2_start}--\ref{line:type2_end}.

We have our main theorem.
\begin{theorem}\label{thm:enum}
	Let $S$ be a set of $n$ points in the Euclidean plane, and let $k$ be an integer with $0\leq k\leq n-3$. One can enumerate all the at most $k$-out polygons in $\setkout(S)$ in $\order(n^{2}\log n|\setkout(S)|)$ time and $\order(n^2)$ space.
\end{theorem}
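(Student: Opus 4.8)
The plan is to establish the theorem by the standard reverse-search accounting: show that (i) the family tree $\ftree(S)$ has exactly the polygons of $\setkout(S)$ as its nodes, so a depth-first traversal visits each exactly once; (ii) the work spent at each node of the tree, excluding recursive calls, is $\order(n^2\log n)$; and (iii) each recursive call on a child is preceded by handing it the information it needs ($\largest$ of the child, and the $\regnum$ values when the child is convex) within the same budget. Item (i) is already in place: Lemma~\ref{lem:par-unq} and Lemma~\ref{lem:pseq} show $\parent$ is well-defined, unique, and terminates at $\convexhull(S)$, so merging parent sequences yields a tree rooted at $\convexhull(S)$ spanning all of $\setkout(S)$, and Algorithm~\ref{alg:childenum} outputs each node once.

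\textbf{Per-node work.} First I would bound the preprocessing done once per recursive call: constructing the convex hull and the ray-shooting structure for $P$ costs $\order(n\log n)$ (Lemma for ray shooting), and the triangular-range preprocessing is done a single time in Algorithm~\ref{alg:main} and persists, so it does not recur. For the Type-1 loop (Lines~\ref{line:type1_start}--\ref{line:type1_end}): there are $\order(n^2)$ candidate pairs $(p_i,p)$, each tested for activity in $\order(\log n)$ time by Lemma~\ref{lem:active_check_Type1} (which needs $\largest(P)$, passed in as $p_j$, plus the two queries), giving $\order(n^2\log n)$; each active pair then produces a child via $\dig$ in $\order(1)$ time by Lemma~\ref{lem:child_const_Type1}, and I must note that the child $\dig(P,p_i,p)$ is handed the value $p$ as its $\largest$, which is correct precisely by Lemma~\ref{lem:dig-act} since the pair is active. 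For the Type-2 loop (Lines~\ref{line:type2_start}--\ref{line:type2_end}), which runs only when $P$ is convex with $\msize{\outside(P)}<k$: by Corollary~\ref{cor:active_set_Type2} the set of active vertices is found in $\order(n^2\log n)$ time given the $\regnum(P,p)$ values, and for each active vertex $p_i$, Lemma~\ref{lem:child_const_Type2} builds $\rmv(P,p_i)$ together with its $\regnum$ values in $\order(n\log n)$; summing over the $\order(n)$ active vertices gives $\order(n^2\log n)$. Hence the non-recursive cost at every node is $\order(n^2\log n)$.

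\textbf{Invariant maintenance and the root.} I would add a short inductive argument that the precondition of Algorithm~\ref{alg:childenum} always holds: when it is called on $P$ with second argument $p_j$, we have $p_j=\largest(P)$ (or $\emptyset$ if $P$ is convex), and when $P$ is convex the caller has already computed $\regnum(P,p)$ for all $p\in\outside(P)$. For a Type-1 child this follows from Lemma~\ref{lem:dig-act}; for a Type-2 child it follows from Lemma~\ref{lem:child_const_Type2}; and $\regnum$ values for a Type-2 child (which is convex) are also supplied there. At the root, $\convexhull(S)$ is convex with $\outside=\emptyset$, so $\largest=\emptyset$ and all $\regnum$ values are vacuous — the call $\mathtt{FindChildren}(\convexhull(S),\emptyset)$ in Algorithm~\ref{alg:main} respects the precondition.

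\textbf{Putting it together.} The traversal visits $\msize{\setkout(S)}$ nodes, each costing $\order(n^2\log n)$ outside recursion, plus the one-time $\order(n^2)$ preprocessing and $\order(n\log n)$ convex-hull construction in Algorithm~\ref{alg:main}; the total is $\order(n^2\log n\,\msize{\setkout(S)})$ time. For space, the triangular-range structure uses $\order(n^2)$, the ray-shooting structures and polygon data use $\order(n)$, and the recursion stack has depth bounded by the longest parent sequence, which is $\order(n)$ (each embedment or insertion strictly increases area and the number of vertices changes by a bounded amount, so depth is $\order(n)$ levels each holding $\order(n)$ data), giving $\order(n^2)$ space overall. \textbf{The main obstacle} I anticipate is the space bound: one must argue that the per-level auxiliary data (notably the $\regnum$ arrays and ray-shooting structures) stored down the recursion stack sums to $\order(n^2)$, i.e. that stack depth is $\order(n)$ and each frame carries only $\order(n)$ words beyond the shared global structures — this requires care about what is reconstructed versus retained across calls, whereas the time bound is a routine sum over the lemmas already proved.
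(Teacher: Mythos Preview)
Your proposal is correct and follows essentially the same route as the paper's own proof: per-node $\order(n^2\log n)$ accounting via Corollaries~\ref{cor:active_set_Type1} and~\ref{cor:active_set_Type2} together with Lemmas~\ref{lem:child_const_Type1} and~\ref{lem:child_const_Type2}, followed by the space bound from the shared $\order(n^2)$ triangular-range structure plus $\order(n)$ recursion depth with $\order(n)$ data per frame. Your explicit verification of the invariant $p_j=\largest(P)$ and of the $\regnum$ precondition (including the root call) is more careful than the paper's terse treatment; the only place your justification is slightly loose is the $\order(n)$ depth bound---area monotonicity alone does not bound depth, but the clean argument is that Case-1 parents strictly decrease the vertex count, Case-2 parents strictly decrease $\msize{\outside}$, and once the sequence reaches a convex polygon it stays convex, so the two phases do not interleave.
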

\begin{proof}
	In each recursive call of $\mathtt{FindChildren}(P,p_j)$, triangular range and ray shooting queries are available by preprocesses.
	Also, from Line~\ref{line:regnum}, we can assume that $\regnum(P,p)$ for each $p\in \outside(P)$ is available
	if $P$ is convex.
	
	First, we assume that $P$ has $c_1$ Type-1 children.
	From Corollary~\ref{cor:active_set_Type1} and Lemma~\ref{lem:child_const_Type1}, Lines \ref{line:type1_start}--\ref{line:type1_end} takes $\order(n^2 \log n + c_1)$ time.
	Since $c_1$ is bounded above by $\order(n^2)$,
	each recursive call takes $\order(n^2 \log n)$ time for Lines \ref{line:type1_start}--\ref{line:type1_end}.
	
	Next, we assume that $P$ has $c_2$ Type-2 children.
	From Corollary~\ref{cor:active_set_Type2} and Lemma~\ref{lem:child_const_Type2}, Lines \ref{line:type2_start}--\ref{line:type2_end} takes $\order(n^2\log{n} + c_2(n \log{n}))$ time.
	Since $c_2$ is bounded above by $\order(n)$,
	each recursive call takes $\order(n^2 \log n)$ time for Lines \ref{line:type2_start}--\ref{line:type2_end}.
	
	Finally, we estimate the space complexity. 
    By Lemma~\ref{lem:triangle-query}, the data structure constructed on the set $S$ to support triangular range queries requires $\order(n^2)$ space. 
    This data structure is used in all recursive calls.
    The algorithm uses $\order(n)$ space in each recursive call for efficiently answering ray shooting queries.
	The depth of a family tree of $\setkout(S)$ is bounded above by $\order(n)$. Therefore, it is sufficient to use $\order(n^2)$ space for a stack of recursive calls.
\end{proof}

The theorem implies that 
the running time of our algorithm is output-polynomial.
Using the 
alternative output method
by Nakano and Uno~\cite{NakanoU05},
we have a polynomial-delay enumeration algorithm.
In the method,
the algorithm outputs polygons
with even depth when we go down a family tree
and outputs polygons with odd depth when we go up.
More precisely, we modify our algorithm
so that the algorithm outputs the current polygon $P$ before the children of $P$ in even depth and outputs $P$ after the children of $P$ in odd depth.
It is easy to see that the modified algorithm outputs a polygon once at most three edge traversals in a family.
See \cite{NakanoU05} for further details. 

\begin{corollary}
	Let $S$ be a set of $n$ points in the Euclidean plane, and let $k$ be an integer with $0\leq k\leq n-3$. One can enumerate all the at most $k$-out polygons in $\setkout(S)$ in $\order(n^{2}\log n)$ delay and and $\order(n^2)$ space.
\end{corollary}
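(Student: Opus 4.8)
The plan is to start from Theorem~\ref{thm:enum} and apply the alternating-output transformation of Nakano and Uno~\cite{NakanoU05}, which turns an output-polynomial reverse-search enumeration into one with polynomial delay without changing the asymptotic per-node work or the space usage. First I would isolate, from the proof of Theorem~\ref{thm:enum}, the key quantitative fact: a single invocation of $\mathtt{FindChildren}(P,p_j)$, excluding the time spent inside its recursive calls, runs in $\order(n^2\log n)$ time. This already follows from that proof, since Lines~\ref{line:type1_start}--\ref{line:type1_end} cost $\order(n^2\log n + c_1)$ with $c_1=\order(n^2)$, and Lines~\ref{line:type2_start}--\ref{line:type2_end} cost $\order(n^2\log n + c_2(n\log n))$ with $c_2=\order(n)$. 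Phrased differently, traversing a single edge of the family tree $\ftree(S)$ (entering or leaving a node) costs $\order(n^2\log n)$ time.

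Next I would describe the modification: emit the current polygon $P$ \emph{before} recursing on its children when the depth of $P$ in $\ftree(S)$ is even, and emit $P$ \emph{after} recursing on its children when the depth is odd. Tracking the parity of the current depth costs only one extra bit passed through the recursion, so neither the time per node nor the $\order(n^2)$ space bound inherited from Theorem~\ref{thm:enum} changes. (Recall that the $\order(n^2)$ space is dominated by the triangular-range-query structure built once on $S$, plus an $\order(n)$ ray-shooting structure per level along an $\order(n)$-deep recursion stack.)

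The remaining point — and the only one requiring care — is the delay bound itself. Here I would invoke the standard Nakano--Uno analysis: under the parity-based output rule, between two consecutive outputs the algorithm performs at most three traversals of edges of $\ftree(S)$ (for example, descending to a leaf and returning one level, or moving between consecutive siblings via their common parent), and the first output is produced within the first $\order(n^2\log n)$ time. Since each edge traversal costs $\order(n^2\log n)$, the delay — including the time before the first output and after the last — is $\order(n^2\log n)$. The main obstacle is making the ``at most three edge traversals'' claim precise for our family tree rather than hand-waving it; this is exactly the setting covered by~\cite{NakanoU05}, so I would verify that our recursion (a DFS on $\ftree(S)$ that enumerates children one at a time with bounded work per edge) satisfies its hypotheses and then cite it, rather than reprove the transformation.
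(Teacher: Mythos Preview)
Your proposal is correct and follows essentially the same approach as the paper: start from Theorem~\ref{thm:enum}, observe that the per-node work is $\order(n^2\log n)$, and apply the Nakano--Uno alternating-output technique (output before recursion at even depth, after recursion at odd depth) to obtain the $\order(n^2\log n)$ delay bound via the ``at most three edge traversals between consecutive outputs'' argument. The paper's justification is in fact briefer than yours, simply citing~\cite{NakanoU05} for the details you spell out.
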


\section{Conclusion}
In this paper, we studied the problem of listing all the at most $k$-out polygons of a given planar point set $S$, where integer $k>0$ is a part of the input. We proposed an algorithm for the problem which uses overall $\order(n^2)$ space and takes $\order(n^{2}\log n)$ time for enumerating each polygon in the output set $\setkout(S)$.

Future work includes proposing a more efficient algorithm.
Can we apply the reverse search with child lists~\cite{TeruiYHHKU23} to improve the running time?

\section*{Acknowledgments}

This work is partially supported by JSPS KAKENHI
Grant Numbers JP23K11027 and JP25K03080, % For Yamanaka

%% The Appendices part is started with the command \appendix;
%% appendix sections are then done as normal sections
%\appendix

%% If you have bibdatabase file and want bibtex to generate the
%% bibitems, please use
%%
\bibliographystyle{elsarticle-num} 
%\bibliography{cas-refs}
\bibliography{mybib}
%% else use the following coding to input the bibitems directly in the
%% TeX file.

% \begin{thebibliography}{00}

% %% \bibitem{label}
% %% Text of bibliographic item

% \bibitem{}

% \end{thebibliography}
\end{document}